\newtheorem{theorem}{Theorem}[section]
\newtheorem{lemma}[theorem]{Lemma}
\newcommand{\Proba}{\mathbb{P}}
\newcommand{\expect}{\mathbb{E}}
\newcommand{\x}{\mathbf x}
\newcommand{\card}[1]{\left|#1\right|}
\newcommand{\R}{\mathcal{R}}
\newcommand{\N}{\mathbb{N}}
\newcommand{\y}{\mathbf y}
\keywords{Population protocols, $k$-majority problem, undecided state dynamics, lower bounds}
\title[Almost Tight Lower Bound for Plurality Consensus]{An Almost Tight Lower Bound for Plurality Consensus with Undecided State Dynamics in the Population Protocol Model}
\author{Antoine El-Hayek}
\affiliation{%
  \institution{Institute of Science and Technology Austria}
  \city{Klosterneuburg}
  \country{Austria}}
\email{antoine.el-hayek@ist.ac.at}
\author{Robert Els\"asser}
\affiliation{%
  \institution{University of Salzburg}
  \city{Salzburg}
  \country{Austria}
}\email{elsa@cs.plus.ac.at}
\author{Stefan Schmid}
\affiliation{%
 \institution{TU Berlin}
 \city{Berlin}
 \country{Germany}}\email{stefan.schmid@tu-berlin.de}
\begin{document}

\begin{abstract}
We revisit the majority problem in the population protocol communication model, as first studied by Angluin et al. (Distributed Computing 2008).
We consider a more general version of this problem known as plurality consensus, which has already been studied intensively in the literature.
In this problem, each node in a system of $n$ nodes, has initially one of $k$ different opinions, and they need to agree on the (relative) majority opinion.
In particular,
we consider the important and intensively studied model of Undecided State Dynamics.

Our main contribution is an almost tight lower bound on the stabilization time: we prove
that there exists an initial configuration, even with bias $\Delta = \omega(\sqrt{n\log n})$, where stabilization requires $\Omega(kn\log \frac {\sqrt n} {k \log n})$ interactions, or equivalently, $\Omega(k\log \frac {\sqrt n} {k \log n})$ parallel time for any $k = o\left(\frac {\sqrt n}{\log n}\right)$.
This bound is tight for any $ k \le n^{\frac 1 2 - \epsilon}$, where $\epsilon >0$ can be any small constant, as Amir et al.~(PODC'23) %
gave a $O(k\log n)$ parallel time upper bound for $k = O\left(\frac {\sqrt n} {\log ^2 n}\right)$.
\end{abstract}

\maketitle
\sloppy

\section{Introduction}

Population protocols are a simple and natural computational framework, in which $n$ anonymous nodes (also called agents) communicate and interact with each other to solve a predefined problem in a distributed manner. In the underlying communication model, a scheduler selects in each discrete time step two nodes for interaction. The selected nodes exchange their current states and each of them changes its own state according to the transition function defined by the population protocol. The set of nodes is expected to eventually stabilize in a final configuration that is given by the problem definition. In the plurality consensus problem, as considered in this paper, every node has at the beginning an opinion assigned from a set of $k$ different opinions, and in the final configuration all nodes agree on one of the initial opinions. 

In the model introduced by Angluin et al.~\cite{DBLP:journals/dc/AngluinADFP06} the population is represented by the nodes of a graph and a scheduler can only choose nodes connected by an edge for interaction. The usual complexity measures in which one is interested are the cardinality of the state space of the transition function and the time needed for the population to stabilize. The time is defined as the number of interactions until a stable configuration is reached. As in previous work, we are also interested in the so-called parallel time, which corresponds to the number of interactions divided by the population size $n$. 

Population protocols have various applications. In their original paper, Angulin et al.~\cite{DBLP:journals/dc/AngluinADFP06} motivated population protocols in the context of sensor networks where nodes perform simple computations. Other motivating examples are processes in chemical reaction networks~\cite{SCWB08}. %
Population protocols can also be implemented at the level of DNA molecules, as shown in~\cite{CDS+13}. Furthermore, 
Cardelli and Csiksz-Nagy~\cite{CCN12} considered similarities between biochemical regulatory processes in living cells 
and population protocols. Population protocols highly influenced the way certain aspects of distributed computing evolved in recent years, for which in 2020 the original paper by Angluin et al.~\cite{DBLP:journals/dc/AngluinADFP06} has been awarded the Edsger W. Dijkstra Prize in Distributed Computing.

In the vast majority of the related papers, as in our paper as well, it is assumed that the graph modeling the population is a clique and a random scheduler is in place. That is, in each discrete time step two nodes are selected uniformly at random for interaction. One of the most prominent protocols for the plurality consensus problem is the Undecided State Dynamics, on which we focus in this paper. In its original, unconditional version, this protocol is simple and only uses $k+1$ different states (see the problem definition below for details). This protocol and variants of it have extensively been analyzed in different communication models. Although very recently an upper bound of $O(k \log n)$ on the parallel time has been derived~\cite{DBLP:conf/podc/AmirABBHKL23}, the question of whether this bound is tight is still open.

\subsection{Model and problem}

A population protocol and its time performance can be formalized as follows
(cf.~\cite{BEFKKR18}). 
Let $V$ denote the set of agents in the population, and let $n = |V|$.
Let $\Sigma$ be the set of states of the protocol, whose cardinality may grow
with~$n$. 
Two interacting nodes change their states
according to a deterministic function 
$f: \Sigma^2 \mapsto \Sigma^2$. 
That is, $f(q', q'') = (r', r'')$ describes the following transition:
if a node in state $q'$ interacts with a node in state $q''$, 
then the first node changes its state to $r'$ and the second node to $r''$.
A population protocol is also assigned an {\em output function\/} $\gamma: \Sigma \rightarrow \Gamma$, which maps every state to an output value. 
The set $\Gamma$ may be the same as $\Sigma$.

In this paper we analyze plurality consensus (aka the $k$-majority problem):
Each agent~$i$ has as an input an initial opinion $s_i \in [k]$, and the goal of the agents is to decide which opinion was the (relative) majority at the start of the computation.
The transition function we consider is given by the (unconditional) Undecided State Dynamics: $\Sigma$ consists of $k+1$ states, the original $k$ opinions, and an extra one, $\bot$, that represents the state of being undecided.
Then $f(s_1, s_2) = (\bot, \bot)$ if $s_1 \neq s_2$, and $s_1, s_2 \in [k]$, $f(s, \bot) = (s, s)$ for any $s \in [k]$.
Otherwise, $f$ is just the identity function.
In other words, when two agents with different opinions meet, they both become undecided, but when a decided agent meets an undecided one, the latter takes on the opinion of the former. Note that in the case of the Undecided State Dynamics the set $\Gamma = \Sigma$ and the output function $\gamma$ is the identity. We assume that the interaction graph is a clique, and at each time step, two nodes are selected for interaction, which are chosen uniformly at random (without replacement), independently of the other time steps. 
We ask how long it takes for the system to stabilize, in the particular case where the majority opinion starts with an initial additional bias of $\Omega (\sqrt{n\log n})$. Note that such a bias is probably needed in order to guarantee that w.h.p. the opinion with the largest (relative) initial support wins (cf.~\cite{DBLP:conf/podc/AmirABBHKL23,10.1137/1.9781611977073.135}).

\subsection{Related work}

Computations over dynamic networks (such as sensor networks) have already been studied intensively in the literature for various models~\cite{afek2013asynchrony,kuhn2011dynamic,michail2020distributed}. 
Probably the most studied problems in the framework of population protocols are leader election and (exact) majority. For two recent surveys, which focus on these problems, we refer the reader to~\cite{DBLP:journals/sigact/AlistarhG18} and~\cite{DBLP:journals/eatcs/ElsasserR18}. As in this paper we analyze plurality consensus, we only present the most relevant results on the related majority problem and adapt the description of~\cite{DBLP:journals/eatcs/ElsasserR18} to outline the related work on majority in population protocols. In this problem, at the beginning every node is in one of two states (called e.g. $A$ and $B$).
In exact majority, the final opinion has to be the one with the largest initial support, even if at the beginning the difference between the support of the two was just $1$. In approximate majority this requirement is weaker: the initial majority should only win with high probability if there is a sufficiently large initial bias between the two opinions (usually this bias is of order $\Omega(\sqrt{n \log n})$).

For the majority problem (as well as leader election), there are a number of results, which present lower bounds on the number of states 
under certain time requirements, 
or bound the (stabilization) time under specific assumptions 
w.r.t. the number of states. 
Furthermore, 
several
majority and leader election algorithms have been derived, which upper bound the number of states 
as well as the stabilization time. 

Two early papers by Draief and Vojnovi\'c~\cite{DBLP:conf/infocom/DraiefV10} and Mertzios et al.~\cite{Mertzios-etal-ICALP2014} 
consider population protocols for exact majority. They studied 
(almost) the same four-state protocol, 
which has a polynomial stabilization time (with high probability\footnote{With high probability or w.h.p. means with probability at least $1-n^{-\Omega(1)}$, where $n$ is the number of agents.} and in expectation) on any graph. 

In an early paper, Angluin et al.~\cite{AngluinAE2008fast} 
presented population protocols with a constant number of states 
for several different functions. 
The protocols they propose are only correct with high probability and they assume 
that a designated leader is available from the start of the computation, which
synchronizes the nodes. 
Their exact majority protocol has a w.h.p. stabilization time of
$O(\log^2 n)$, and it alternates between so-called cancellation and duplication phases, an idea used in many subsequent papers.

Note that any protocol for exact majority which uses a constant number of states (as the four-state protocol described above) is in general slow. 
However, if the initial imbalance between the support of $A$ and $B$ is large, then the four-state protocol stabilizes fast. 
In order to increase the initial imbalance, Alistarh et al.~\cite{DBLP:conf/podc/AlistarhGV15} 
multiplied the opinion on each node by some (large) integer. 
Based on this
idea, Alistarh et al.~\cite{DBLP:conf/soda/AlistarhAEGR17} achieved a stabilization time of $O(\log^3 n)$, w.h.p.~and
in expectation, by utilizing $O(\log^2 n)$ states.

Bilke et al.~\cite{DBLP:conf/podc/BilkeCER17} extended the cancellation-duplication
framework from~\cite{AngluinAE2008fast} to the leaderless case, provided that the agents
have enough states to store the number of interactions they performed so far. The stabilization time of their majority protocol 
is $O(\log^2 n)$ w.h.p.~and in expectation, and it utilizes $O(\log^2 n)$
states.

On the lower bound side, Alistarh et al.~\cite{DBLP:conf/soda/AlistarhAG18} proved
that any exact majority
protocol with expected stabilization time $O(n^{1-\epsilon})$ ($\epsilon$ can be any
positive constant), which satisfies two natural properties called {\em monotonicity} 
and {\em output dominance}, 
requires $\Omega(\log n)$ states. They also presented an algorithm
with $\Theta(\log n)$ states and $O(\log^2 n)$ stabilization time (w.h.p.~and
in expectation). 
Monotonic protocols have the property that their running time does not increase
if they are run on a smaller number of agents.
Output dominance means that
``if the positive counts of states in a stable configuration are changed, then
the protocol will stabilize to the same output" (cf.\cite{DBLP:journals/eatcs/ElsasserR18}). The (w.h.p.~and expected) stabilization time has subsequently been improved to $O(\log^{5/3} n)$ in~\cite{BEFKKR18}, to $O(\log^{3/2} n)$ in~\cite{NunKKP20}, and finally to 
$O(\log n)$ in~\cite{DotyEGSUS21}, by keeping the number of states at $O(\log n)$.

All the results above were derived for \emph{stabilization} time, and the lower bounds do not hold for the so-called \emph{convergence} time.
The convergence time is the time required by the protocol to reach a configuration with the correct output property; however, the system may leave this configuration with a small probability. In contrast, if the system stabilizes, then the output of the system does not change anymore.\footnote{See e.g.~\cite{BEFK+18} for details. In the Undecided State Dynamics, convergence and stabilization are equivalent.}
Kosowski and Uzna{\'n}ski~\cite{DBLP:conf/podc/KosowskiU18} and 
Berenbrink et al.~\cite{BEFK+18}
derived algorithms with polylogarithmic \emph{convergence} time, which use $o(\log n)$ states.  
As outlined in~\cite{BEFKKR18}, in~\cite{DBLP:conf/podc/KosowskiU18} the authors presented a programming framework that leads to protocols
which require only $O(1)$ states and 
converge in polylogarithmic time (in expectation), but they are only correct w.h.p. These protocols can be changed so that they are
always correct by either allowing $O(\log\log n)$ states, while the convergence time still remains polylogarithmic, or by allowing 
$O(n^\epsilon)$ convergence time, while keeping the number of states constant.
In~\cite{BEFK+18} the authors presented an always correct
protocol with a w.h.p. convergence time of
$O(\log^2 n/{\log s})$ and $\Theta(s + \log\log n)$ states,
and an always correct
protocol with w.h.p. stabilization time of
$O(\log^2 n/{\log s})$ and $O(s \cdot \log n / {\log s})$ states,
where $s\in [2,n]$.

One research direction in plurality consensus focuses on the state complexity (regardless of the time complexity) of protocols, which are required to \emph{always} determine the plurality opinion.
While clearly at least $k$ states are required to encode $k$ opinions,~\cite{DBLP:conf/ciac/NataleR19} shows that always correct plurality consensus needs even $\Omega(k^2)$ states.
The protocol of~\cite{DBLP:conf/opodis/GasieniecHMSS16} utilizes $O(k^{11})$ states, which can be improved to $O(k^6)$ provided that a total ordering among the opinions exists.
Clearly, the lower bound of $\Omega(k^2)$ only holds, if it is required that the correct plurality opinion is determined with probability $1$.
If such strong guarantees are not required, then the number of states  can be much smaller. In~\cite{10.1137/1.9781611977073.135} a synchronized variant of the Undecided State Dynamics has been presented that reaches consensus w.h.p.~in $O(\log^2 n)$ parallel time using $O(k \log n)$ states.
However, this protocol solves \emph{approximate} plurality consensus, i.e., if the 
initial bias is $\Omega(\sqrt{n \log n})$, then w.h.p.~the opinion with the initially largest support wins, otherwise a so-called significant opinion wins w.h.p. In the case $k=2$, the unconditional Undecided State Dynamics has a w.h.p. and expected stabilization time of $O(\log n)$~\cite{CGGNPS18}. Recently, Amir et al.~\cite{DBLP:conf/podc/AmirABBHKL23} analyzed the unconditional Undecided State Dynamics for the plurality consensus problem and showed that their protocol stabilizes w.h.p.~within $O(k \log n)$ parallel time for any initial configuration as long as $k = O(\sqrt{n}/\log^2 n)$. The question of whether this bound on the stabilization time is tight is still open.

We should note that the unconditional Undecided State Dynamics has extensively been analyzed in the Gossip communication model. In this model (which can be seen as a synchronous variant of the population protocol model) in each discrete time step, every node randomly chooses another node for interaction to perform a state transition. Becchetti et al. defined the concept of \emph{monochromatic distance} $\mbox{md}(c)$ of a configuration $c$ and showed that in this model the time needed to reach a final configuration is $O(\mbox{md}(c) \log n)$ w.h.p., where $c$ is the initial configuration in the population~\cite{BCNPS15}. They also derived a lower bound that is asymptotically tight up to a $\log n$ factor.

As described by Amir et al.~\cite{DBLP:conf/podc/AmirABBHKL23}, the differences in how nodes are scheduled for interaction in the population protocol model and the Gossip model, respectively, cause the Undecided State Dynamics to ``exhibit significant qualitative differences when run in either setting, even in the case when $k=2$''. One of the reasons for these differences is the fact that while in the Gossip model in each step a node may change its opinion only once, and each node is selected for interaction, in the population protocol model a node may change its opinion up to $\Omega(\log n)$ many times in $n$ consecutive interactions (which corresponds to one parallel round) while a constant fraction of nodes is not even selected for interaction. 
Thus, there are so far no general analysis techniques that allow us to transfer results from one model to the other one (cf.~\cite{CGGNPS18}).

\subsection{Contribution}

Our main contribution is an almost tight lower bound on the stabilization time of Undecided State Dynamics for plurality consensus in population protocols. 
Specifically, we show that the time needed to stabilize is $\Omega\left(kn \log \frac {\sqrt n} {k\log n}\right)$ interactions, or in $\Omega\left(k \log \frac {\sqrt n} {k\log n}\right)$ parallel time\footnote{Recall that the parallel time is equal to the number of interactions divided by $n$.}, in the case where $k = o(\frac {\sqrt n} { \log n})$. In our proofs we assume that our initial configuration has bias at most $O(\sqrt{n}/(k \log n))^{1/4} \sqrt{n \log n}$. Interestingly, this includes even an initial bias of $\omega(\sqrt{n \log n})$. 
This bound is tight for any $ k \le n^{\frac 1 2 - \epsilon}$, where $\epsilon >0$ can be any small constant, as Amir, Aspnes, Berenbrink, Biermeier, Hahn, Kaaser, Lazarsfeld~\cite{DBLP:conf/podc/AmirABBHKL23} gave a $O(k\log n)$ parallel time upper bound. 
For larger values of $k$, as any initial configuration that is valid for $k_0$ is also valid for $k\ge k_0$, we can simply plug in $k_0 = \frac {\sqrt n} {\log n \log \log n}$ to get a $\Omega\left(\frac {\sqrt n \log \log \log n} {\log n \log \log n}\right)$ parallel time lower bound. 

Our analysis is based on a precise characterization of the number of undecided nodes over time, using drift analysis and random walks. Our technical approach and its novelty will be discussed in more details in the next section. 

\subsection{Notation and organization}

Throughout the paper, $\x$ represents a configuration the system can be in: $\x = (x_1, \dots, x_k, u)$, where $x_i$ is the number of agents with opinion $i$ for all $i$, and $u$ is the number of undecided agents.
We denote by $\x(t)$ the configuration of the system after the $t$-th interaction, and $\x(0)$ is the initial configuration.
$x_i(t)$ is the number of agents with opinion $i$ after interaction~$t$. Similarly, $u(t)$ is the number of undecided nodes after $t$ interactions. 
Accordingly, $x_i(0)$ is the initial number of nodes with opinion $i$ (clearly, $u(0)=0$). 
We assume that the opinions are initially ordered from most common opinion to least common, i.e, $x_1(0) \ge x_2(0) \ge \dots \ge x_k(0)$.

The remainder of this paper is organized as follows. We first provide a technical overview and an empirical motivation in Section \ref{sec:experiments}. Our lower bound analysis is presented in Section \ref{sec:lowerbound}. We conclude and discuss open research questions in Section \ref{sec:conclusion}. Some technical details are deferred to the Appendix.

\section{Motivation and Technical Overview}\label{sec:experiments}

Before we delve into the technical details of our lower bound proof, let us provide some intuition about how agent opinions propagate across
the distributed system, and how majority and minority opinions evolve accordingly. Based on this intuition and the observed challenges, 
we will then give an overview of our analytical approach and its key ideas.

\begin{figure*}[t]
    \centering
    \includegraphics[width=0.49\linewidth]{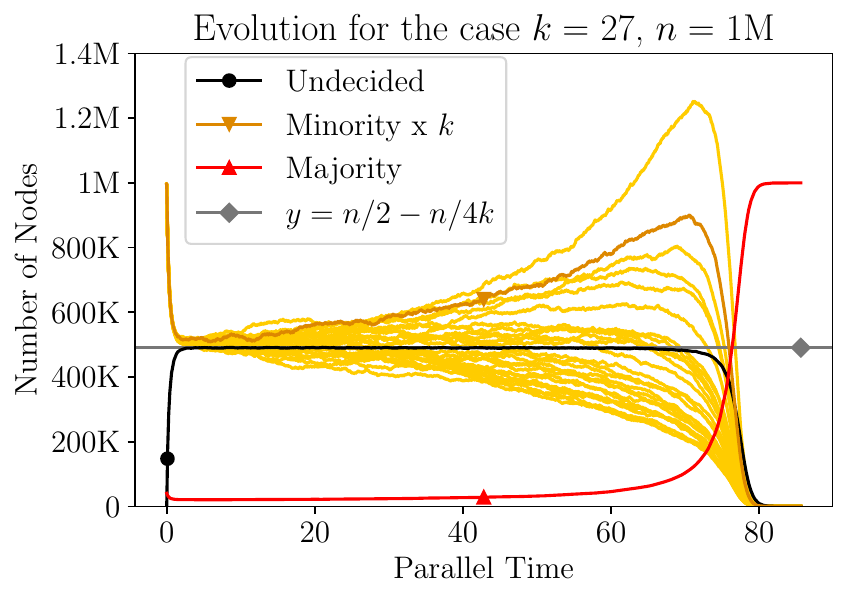}
    \includegraphics[width=0.49\linewidth]{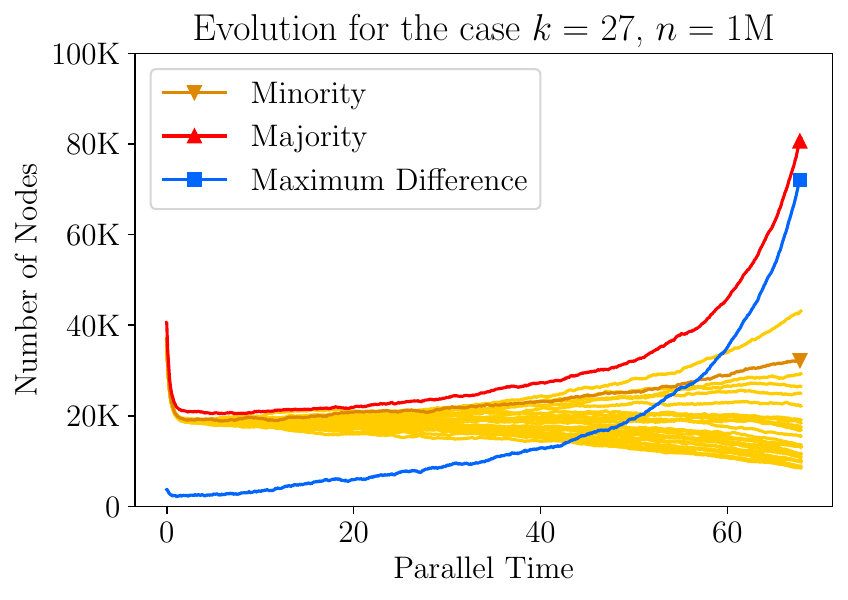}
    \caption{Intuition for stabilization of plurality consensus in undecided state dynamics. On the left figure, we scale up the minority opinions by a $k$ factor for visibility. In this experiment, $n = 1,000,000$ agents and $k = \frac {\sqrt{n}}{\log n \log \log n}$.
In the initial configuration $k-1$ opinions had the same support, while opinion $1$ had an additive advantage of $\sqrt{n \log n}$. Out of all the minority opinions, we plot one with a darker color for visibility.}
\Description[Results of a simulation on 1 million agents]{The picture on the left shows the full experiment, from parallel time 0 to time 85. The picture on the right zooms in on times from 1 to 65 for better visibility. The majority opinion stays close to 0 for parallel time from 0 to 65, then surges to reach 1M around time 85. The number of undecided nodes goes from 0 to roughly n/2 -n/4k in 5 parallel time, stays very clos to that value until time 65, then collapses to 0. the minority opinions all stay close to 0 thoughout, after a steep decline in time 0 to 5, some of them increase slowly, some decrease slowly, but they all collapse after time 65. The maximum difference between the majority opinion and the other opinions seems to grow like a slow exponential.}
    \label{fig:simulation}
\end{figure*}

To illustrate the evolution of opinions and the stabilization behavior, \Cref{fig:simulation} (left) plots a typical simulation run over parallel time. In this example, all minority opinions are initially equally frequent, while the majority one has an initial additional bias of $\sqrt{n \log n}$. The majority opinion is plotted in red. The minority opinions are plotted in yellow, except for a random one which we select as an example and which we highlight in orange; for better visibility we multiplied the number of each minority opinion by $k$. 

We make several observations which highlight parts of the complexity of the problem.
First, note that different minority opinions evolve differently. In particular, not all minority opinions are strictly decreasing over time, but many are actually increasing over a long time period. In the example in \Cref{fig:simulation}, one opinion even surpasses its initial count. 
On the other hand, the majority opinion in this example (which is typical for many runs) remains low for a long time during stabilization, but then increases quickly toward the end. We can also see that the number of undecided opinions stays close to $\frac n 2 - \frac n {4k}$ throughout the execution.    

In \Cref{fig:simulation} (right), we zoom in on the time period for which it takes $x_1$ to double from its initial number of nodes.
We can see that it takes most of the stabilization time to reach this configuration, as the system needs around $70$ parallel time to do so, after which only $20$ more rounds are required to fully stabilize.

These simulations also provide some intuitive motivation for our choices for analyzing the problem theoretically in the next section. We will prove that the number of undecided nodes does not substantially exceed $\frac n 2 - \frac n {4k}$ with high probability, and we only consider the interactions before $x_1$ reaches $2 x_1(0)$, which we will approximate as $2\frac n k$.
One particular ingredient we consider in our analysis is the \emph{maximum} difference between the majority opinion and minority ones (also plotted in \Cref{fig:simulation} (right)): $\max_{j\ge 2}\{x_1-x_j\}$.
The idea is that as long as this difference stays small, the system is very slow to change.

More concretely, for our analysis, we proceed as follows.

We first observe that a precise characterization of the number of undecided nodes $u(t)$ is key to understand the state dynamics as, intuitively, for any opinion $i$, the larger the number of undecided nodes, the more new nodes can the $i$-opinionated nodes ``convert'' to the opinion $i$, hoping to compensate for the $i$-opinionated nodes that meet nodes with other opinions and thus become themselves undecided. In fact, in their analysis of the stabilization time, Amir, Aspnes, Berenbrink, Biermeier, Hahn, Kaaser and Lazarsfeld~\cite{DBLP:conf/podc/AmirABBHKL23} derived an upper bound and a lower bound on the number of undecided nodes throughout the interactions: $\frac n 2 -\frac {x_1} 2 \le u(t) \le \frac n 2$ for any $t$ after the first $n \log n$ interactions.
More precisely, the evolution of the number of nodes in opinion $i$ is driven by the number of undecided nodes.
For each opinion, there exists a value $u_i$ of the number of undecided nodes that acts as a threshold. That is, if the number of undecided nodes $u$ is above $u_i$, then the corresponding opinion $x_i$ increases, whereas if $u$ is below the threshold, then $x_i$ decreases.
The threshold is a decreasing function in the number of nodes in opinion $i$: the larger $x_i$ is, the smaller $u_i$ is.
One can see this in action in \Cref{fig:simulation} on the left.
When $u(t)$ is still very small in the beginning, all of the opinions decrease quickly.
Then, once $u(t)$ settles around $\frac{n}{2} - \frac{n}{4k}$, some opinions start to steadily increase (even minority ones) while others decrease.
One can understand this the following way: in the initial steps, where the number of undecided nodes increase quickly, some opinions, due to randomness, end up having an advantage on the others.
Towards the end, the number of undecided nodes starts dropping, it thus goes below all thresholds but the one of the majority opinion, which means all opinions drop quickly apart from the majority one.

In our case, for the study of the lower bound, we fully control the initial configuration of the opinions, and we are able to give a better upper bound on $u(t)$ over time, which happens to be very close to the threshold of any opinion $i$.
To do so, we use drift analysis (for a nice introduction, we refer to Lengler~\cite{driftanalysis}).
More specifically, we use a result by Oliveto and Witt~\cite{DBLP:journals/tcs/OlivetoW15}.
Here is a quick intuition on drift analysis: assuming we know the current configuration of the system, we compute the expectation of the change in the number of agents in each state after the next interaction. 
The idea is, if a number changes in expectation by a value $\alpha$ at each interaction, and our goal is to show that the actual number does not wander off by more than $\beta$ from its original value, this takes at least $\Omega(\beta/\alpha)$ many interactions. 
A few more hypotheses are needed to ensure that this result holds with high probability, and to ensure that the probabilities are well-behaved, but this is the main idea.

It turns out that $u(t)$ settles around $\frac n 2 - \frac n {4k}$, and to show that $u(t)$ never substantially exceeds this value, we prove that if it slightly exceeds this value, then at each interaction, in expectation, $u(t)$ decreases by at least $\sqrt{\log n / n}$.
Drift analysis tools then allow us to ensure that $u(t)$ will drift no more than (roughly) $\theta (\sqrt{n \log n})$ away from that position.

The second step we make after giving an upper bound on $u(t)$, is to show that in $\theta(kn)$ interactions, no opinion can go from $\frac {3n}{2k}$ nodes to $\frac{2n}{k}$ agents.
Here, classic drift analysis fails to help, and that for one main reason: drift analysis essentially looks at how much the expectation increases or decreases over time, but sometimes, the expectation gets beaten by the variance of the process, and drift analysis fails to capture the lack of concentration of the process.
To understand that, consider a random walk starting at $0$ and which at each step either increases or decreases by $1$, each with probability $1 / 2$.
The expectation of its increase is $0$ so in expectation, after $m$ steps, it is still at $0$.
However, we know that with high probability it will have reached $\theta(\sqrt m)$ at some point during those $m$ steps, as one can see the position after $m$ steps as the result of a binomial distribution with parameters $(1 / 2, m)$, whose standard deviation is of the order of $\sqrt m$.

In our case, we avoid this problem by looking more carefully at the probabilities involved.
By considering the evolution of $x_i$, while its expectation increases slowly over time, it is not overtaken by the variance.
To see that, think again of our example of our $\pm 1$ random walk: Imagine that now, the random walk increases with probability $p / 2$, decreases with probability $p / 2$, and stays put with probability $1 -p$.
Look now at what happens after $m$ steps.
The walk actually moved for $pm$ out of those steps, and thus the standard deviation is now around $\sqrt{pm}$.
If $p = o(1)$, this has a significant impact.

This is exactly what happens in our case, where for $x_i$ to change, in an interaction, a node with opinion $i$ must have been chosen.
If at most $2\frac n k$ nodes with this opinion exist, then the probability of this event happening is of the order of $ 1/ k = o(1)$.

While we could have chosen to stop here for our analysis, and thus give a $\Omega(k)$ lower bound, we go one step further: there is a weakness in this analysis, i.e., we overestimate the number of nodes in opinion $i$, by stating that the initial count is at most $\frac {3n}{2k}$, while in practice, most opinions stay close to $\frac n {2k}$ for most of the process.
This (high) estimate of $2\frac n k$ then gives a too low threshold on $u$ for $x_i$ to increase, and thus the upper bound on the rate at which $x_i$ increases is not tight enough.
To improve the result, we would need to give a better estimate on the initial $x_i$, and show that it is close to $\frac n {2k}$.
This is not straightforward, as the initial count is close to $\frac n k$, and we would thus need to analyze the initial phase where $u$ increases sharply while all the $x_i$ decrease.

We find a workaround for this difficulty: while it might take effort to accurately approximate $x_i$ after the first quick-changing phase, a value that is easier to estimate is $\Delta_{ij}=x_i-x_j$.
Here, instead of requiring a good estimate on $x_i$ and $x_j$ to analyze the evolution of $\Delta_{ij}$, it turns out that we only need their order of magnitude, as well as a good estimate of $\Delta_{ij}$, which we have.
Therefore, knowing that $x_i \le 2\frac n k$ for $\theta(kn)$ iterations allows us to show that the maximum $\Delta_{ij}$ needs at least $\theta(kn)$ iterations to double.
However, if all $\Delta_{ij}$ are small enough (if they are all $o(\frac n k)$), then it means that no opinion drifts far from the others, and thus $x_i \le \frac {3n}{2k}$ after those $\theta(kn)$ interactions.

Hence, our high level argument works as follows: First give a good estimate on $u$, then 
on the number of nodes in each opinion, and on the maximum difference between opinions:
As long as $x_i \le \frac {3n}{2k}$ at a time $t$ and $\Delta_{ij} = o(\frac n k)$ for all $i, j$, then the order of $x_i$ does not change in the next $\theta (kn)$ interactions, which in turn is used to show that $\Delta_{ij}$ does not double during those same interactions.
This in turn shows that $x_i \le \frac {3n}{2k}$, which allows us to restart another iteration of the induction.
The induction holds for $\log\left(\frac{\sqrt n}{k \log n}\right)$ iterations, which then gives the lower bound.

\section{Lower bound}\label{sec:lowerbound}

 In this section, we will give a lower bound on the number of rounds needed for the protocol to stabilize.
For that, we analyze the situation where all minority opinions start with the same number of nodes, that is, $x_i(0) = x_j(0)$ for any $i,j \in [2,k]$, while the majority opinion starts with an initial gap of $x_1(0)-x_2(0)=O\left((\sqrt{n}/(k \log n))^{1/4} \sqrt{n \log n}\right)$. We also require that $k = o\left( \frac {\sqrt n}{\log n}\right)$.
Since a $\Omega(\log n)$ lower bound for the problem is trivial (in $o(\log n)$ parallel time, w.h.p. there are nodes that have not interacted at all) which implies a $\Omega(k\log n)$ lower bound for $k = O(1)$, we focus on the case $k = \omega(1)$.

We first begin by giving an upper bound on the number of undecided nodes.

\begin{lemma}\label{lem:stableu}
    For any $\tau \leq n^4$, and any initial configuration, it holds with probability at least $1-n^{-4}$ that $u(\tau) \le \frac n 2 - \frac n {4k} + \frac {10n} {(k-1)^2} + (20\cdot 132 +1)\sqrt{n \log n}$. 
\end{lemma}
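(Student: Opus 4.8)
The plan is to apply the drift analysis tool of Oliveto and Witt~\cite{DBLP:journals/tcs/OlivetoW15} to the process $u(t)$, using the threshold value $a := \frac n 2 - \frac n {4k} + \frac{10n}{(k-1)^2}$ as a ``barrier'' that $u(t)$ cannot substantially exceed. First I would compute the one-step drift $\expect[u(t+1)-u(t)\mid \x(t)]$ exactly. The number of undecided nodes increases by one when two agents with distinct opinions in $[k]$ interact, and decreases by one when an undecided agent meets a decided one; this happens with probabilities proportional to $\sum_{i\neq j} x_i x_j = \left(\sum_i x_i\right)^2 - \sum_i x_i^2$ and to $2u(n-u)$ respectively, each divided by $\binom n 2$. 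So the drift is, up to the $\binom n2$ normalization, $\left((n-u)^2 - \sum_i x_i^2\right) - 2u(n-u)$. Using $\sum_i x_i^2 \ge \frac{(n-u)^2}{k}$ (power mean / Cauchy--Schwarz on the $k$ opinions summing to $n-u$), the positive part is at most $(n-u)^2\left(1-\frac1k\right)$, and a short calculation shows that when $u \ge a + c\sqrt{n\log n}$ for a suitable constant $c$, the drift is at most $-\Omega\!\left(\sqrt{\log n / n}\right)$; when $u$ is somewhat below $a$ the drift may be positive, which is exactly the regime the Oliveto--Witt negative-drift theorem is designed to handle. I would also record the crude bound that $|u(t+1)-u(t)| \le 1$ always, which immediately gives the exponentially-decaying step-size (sub-Gaussian, in fact bounded) hypothesis the theorem requires.

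Next I would feed these two facts --- negative drift of order $\sqrt{\log n/n}$ above the barrier, and bounded step size --- into the negative-drift-with-scaling theorem. The interval length we need to cross is $\Theta(\sqrt{n\log n})$ (the gap between $a$ and the claimed bound $a + (20\cdot132+1)\sqrt{n\log n}$), the drift per step is $\varepsilon = \Theta(\sqrt{\log n/n})$, and the step size is $O(1)$, so the exponent in the failure probability is of order $\frac{\varepsilon \cdot \ell}{r^2} = \Theta\!\left(\sqrt{\tfrac{\log n}{n}}\cdot \sqrt{n\log n}\right) = \Theta(\log n)$, which yields a failure probability of $n^{-\Omega(1)}$; by choosing the constants $20$ and $132$ appropriately (these are presumably the constants coming out of the statement of~\cite{DBLP:journals/tcs/OlivetoW15}) this is driven below $n^{-4}$ even after a union bound over $\tau \le n^4$ steps. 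The term $\frac{10n}{(k-1)^2}$ in the statement is the slack one loses in the Cauchy--Schwarz step because the minority opinions need not be perfectly balanced; I would carry it symbolically rather than trying to optimize it.

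The main obstacle I expect is \emph{verifying the precise hypotheses of the Oliveto--Witt theorem} rather than the drift computation itself, which is elementary. In particular that theorem is usually stated for a process on an interval with a reflecting-type behavior below the barrier and requires (i) a self-map / ``confinement'' condition ensuring the walk does not escape the relevant interval from below in a way that breaks the argument, (ii) the drift condition to hold uniformly on the whole super-barrier region, and (iii) the step-size tail condition. Point (i) is slightly delicate here because $u(t)$ genuinely does wander in the region below $a$ and the drift there is not sign-definite; the standard fix is to apply the theorem to the stopped/truncated process, or to the auxiliary process $\max(u(t)-a,0)$, and argue that truncation only helps. I would also need to double-check that the drift bound holds for \emph{every} configuration $\x(t)$ with $u$ in the bad range, including degenerate ones where one opinion dominates --- but there $\sum_i x_i^2$ is even larger, making the drift more negative, so that case is fine. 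Finally, a union bound over all $\tau \le n^4$ is what forces the $n^{-4}$-vs-$n^{-4}$ bookkeeping, and I would make sure the per-step failure probability is $n^{-\omega(1)}$, not merely $n^{-O(1)}$, so the union bound goes through with room to spare; if it does not, one reruns the theorem with a slightly larger constant in front of $\sqrt{n\log n}$.
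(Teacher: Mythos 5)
Your overall strategy matches the paper's: compute the one-step drift of $u(t)$, show it is at most $-\Omega(\sqrt{\log n / n})$ once $u$ exceeds $\tilde u = \frac n 2 - \frac n {4k} + \frac{10n}{(k-1)^2}$ by order $\sqrt{n\log n}$, and feed this into the Oliveto--Witt theorem over an interval of length $20\cdot 132\sqrt{n\log n}$. (Two small remarks on the bookkeeping: no union bound over the $n^4$ steps is needed, since that theorem bounds the first hitting time directly and yields $\Proba[T^* \le n^4] = O(n^{-4})$ in one shot; and your concern about the sub-barrier region is moot because the drift hypothesis is only required inside the interval $(a,b)$.)

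There is, however, a genuine error in your drift computation. When two agents with distinct opinions interact, \emph{both} become undecided, so $u$ increases by $2$, not by $1$ (and consequently $|u(t+1)-u(t)| \le 2$, not $\le 1$). The correct drift is proportional to $\bigl((n-u)^2 - \sum_i x_i^2\bigr) - u(n-u)$ over $\binom{n}{2}$, whereas you wrote $\bigl((n-u)^2 - \sum_i x_i^2\bigr) - 2u(n-u)$. This is not a harmless constant: with $\sum_i x_i^2 \approx (n-u)^2/k$, the correct expression changes sign at $u \approx n\frac{k-1}{2k-1} = \frac n 2 - \frac n {4k} + O(n/k^2)$ --- which is precisely why the threshold in the statement sits at $\frac n 2 - \frac n {4k} + \frac{10n}{(k-1)^2}$ --- while your expression changes sign at $u \approx n/3$. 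So the ``short calculation'' you defer to would not establish what you claim: your formula would (falsely) imply that $u$ never rises much above $n/3$, contradicting the known behavior $u(t) \ge \frac n 2 - \frac{x_1}{2}$ after the first $n\log n$ interactions, and it would not identify the stated threshold as the point where the drift becomes $-\Theta(\sqrt{\log n/n})$. With the correct increments, verifying the drift bound at $u = \tilde u + c\sqrt{n\log n}$ is exactly the delicate cancellation the paper carries out: the constant and $1/k$ terms cancel identically, only a $-2c\sqrt{\log n/n}$ contribution survives, and the $\frac{10n}{(k-1)^2}$ term is not Cauchy--Schwarz slack (the minimizing configuration is the balanced one, which is what the bound $\sum_i x_i^2 \ge (n-u)^2/k$ already assumes) but the buffer needed to absorb the $O(1/k^2)$ remainder of that cancellation. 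The rest of your plan (choice of $\ell$, $\epsilon$, $r$, and the role of the constants $20$ and $132$) matches the paper once the drift is corrected.
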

\begin{proof}
To prove the lemma, we will use \Cref{thm:driftlowbound}.
We model $u(t)$ to be a random walk over integers, and first compute the expectation of $u(t+1)$ conditioned on the configuration of the system at time $t$, which is denoted by $\x = (x_1, \dots, x_k, u)$ (for ease of notation, we write $x_i$ instead of $x_i(t)$ and $u$ instead of $u(t)$).
Clearly, the number of undecided nodes can either decrease by $1$, if a decided node interacts with an undecided node, or it can increase by two, if two nodes of different opinions interact.
The probability to decrease by one is $2\frac u n \cdot \frac {n-u} {n-1}$, while the probability 
to increase by two is $\frac{\sum_{i\in [k]}x_i}{n}  \cdot \frac{\sum_{j \neq i}x_j} {n-1}$.
Note that $\frac 1 {n-1} = \frac 1 n +O(\frac 1 {n^2})$.
We therefore have:
        \begin{align*}
    \expect[&u(t+1)\big|\x(t)=\x] \\&= u-2\frac u n \cdot \frac {n-u} n + 2 \frac{\sum_{i\in [k]}x_i}{n}\frac{\sum_{j \neq i}x_j} n +O\left(\frac 1 n \right)
    \\
    &= u - 2 \frac u n +2 \frac {u^2} {n^2} +2 \frac{\sum_{i \in [k]}x_i (n-u-x_i)}{n^2}+O\left(\frac 1 n \right)
    \\
    \intertext{For the second equality we used $\sum_{j \neq i}x_j = n-u-x_i$. Splitting $x_i(n-u-x_i)$ into $x_i(n-u)$ and  $-x_i^2$, and writing $\sum_{i \in [k]}x_i = n-u$ we obtain}
    &= u-2\frac u n +2\frac {u^2} {n^2} +2 \frac {n^2-2un+u^2}{n^2} -2\frac {\sum_{i \in [k]}x_i^2}{n^2}+O\left(\frac 1 n \right)\\
    &= u -6 \frac u n +4 \frac{u^2}{n^2} +2 -2\frac{\sum_{i \in [k]}x_i^2}{n^2}+O\left(\frac 1 n \right)
    \end{align*}
     Clearly, if we fix $n-u$, the sum $\sum_{i \in [k]}x_i^2$ is minimized if all $x_i$ are equal. Then we have
    \begin{align*}
    \expect[&u(t+1)\big|\x(t)=\x]\\
    &\leq u-6\frac u n + 4 \frac{u^2}{n^2} +2 -2 \frac{(n-u)^2}{k n^2} +O\left( \frac 1 n \right) \\
    &= u-6\frac u n + 4 \frac{u^2}{n^2} +2 - \frac 2 k + \frac{4u}{nk} -\frac{2u^2}{n^2 k} +O\left( \frac 1 n \right) 
    \end{align*}
    Let us assume that $u = \frac n 2 - \frac n {4k}+\frac {10n} {(k-1)^2} + c\sqrt{n\log n}$ for some constant $c$. Then, 
    \begin{align*}
    \expect[u(t+1)\big|&\x(t)=\x]\\
    &\leq
    u - 6\frac{\frac n 2 - \frac n {4k}+\frac {10n} {(k-1)^2} + c\sqrt{n\log n}}{n} \\ & \quad  + 4 \frac{(\frac n 2 - \frac n {4k}+\frac {10n} {(k-1)^2} + c\sqrt{n\log n})^2}{n^2} + 2 - \frac 2 k \\
    &\quad  + \frac{4(\frac n 2 - \frac n {4k}+\frac {10n} {(k-1)^2} + c\sqrt{n\log n})}{nk}\\ &\quad - \frac{2(\frac n 2 - \frac n {4k}+\frac {10n} {(k-1)^2} + c\sqrt{n\log n})^2}{n^2 k} +O\left( \frac 1 n \right) \\    %
    &\leq u - 3 + \frac{3}{2k} - \frac{60}{(k-1)^2} - 6 c \sqrt{\frac{\log n}{n}} \\ &\quad+1  +\frac{1}{4k^2} - \frac{1}{k} + \frac{40}{(k-1)^2} \\
    &\quad + 4c \sqrt{\frac{\log n}{n}} + 2 - \frac 2 k + \frac 2 k - \frac{1}{k^2}  - \frac{1}{2k} 
    + \frac{1}{2k^2} \\ &\quad+ O\left(\frac{1}{k^3}\right)+o\left(\sqrt{\frac{\log n}{n}}\right) \\
    &\leq u - c \sqrt{\frac{\log n}{n}}
    \end{align*}
    for any $c \geq 1$ if $k$ is large enough. Define $\tilde{u} = \frac n 2 - \frac n {4k}+\frac {10n} {(k-1)^2}$. %

We can now apply \Cref{thm:driftlowbound} with $X_t=-u(t), X_0=0, a=-\tilde{u}-\sqrt{n\log n}-20\cdot 132 \sqrt{n\log n}, b=-\tilde{u}-\sqrt{n\log n} , \ell = 20\cdot 132\sqrt{n\log n}, \epsilon = \sqrt{\frac {\log n}{n}}$ and $r=\sqrt{5}$. %
Then, we obtain that $T^*$, the first time so that $u(T^*) \ge \frac n 2 - \frac n {4k} + \frac {10n} {(k-1)^2} +\sqrt{n\log n}+ 20\cdot 132\sqrt{n\log n}$, satisfies:

$$
\Proba[T^* \le \exp (4\log n)] \le O(\exp (-4\log n))
$$
This implies that $\Proba[T^* \le n^4] \le O(n^{-4})$.
Therefore, with high probability $u(t)$ is less than $\frac n 2 - \frac n {4k} + \frac {10} {(k-1)^2} + \sqrt{n \log n} +20\cdot 132 \sqrt {n \log n}$ 
for all $0\le t \le n^4$.
\end{proof}

We will need the following lemma to continue our analysis.
It is a simplified version of Theorem~20 in~\cite{driftanalysis}, and provides the tool we need for Lemmas \ref{lem:xone} and \ref{lem:deltas}.

\begin{lemma}\label{thm:randomwalk}
    Let $Y(t)$, $t \geq 0$, be a random walk defined over the set of integers as follows: $Y(0) = 0$, $Y(t+1) = Y(t)$ with probability $1-p(t)$, $Y(t+1) = Y(t)+1$ with probability $\frac {p(t) + q(t)} {2}$, and $Y(t+1) = Y(t)-1$ with probability $\frac {p(t) - q(t)} {2}$.
    Assume there are values $p>0$ and $q>0$ such that $0 \le p(t) \le p$ and $-p(t) \le q(t) \le q $.
    For any $T$ such that $T \ge 32\left( \frac {p - q^2} {2q} + \frac 2 3 \right) \log n$,
    with probability at least $1-n^{-2}$, $Y(t)<T$ for every $t < \min \{\frac T {2q}, n^2\}$ steps.
\end{lemma}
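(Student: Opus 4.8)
The plan is to view $Y(t)$ as a sum of bounded one‑step increments, control their conditional mean and (conditional) variance, and then invoke a Bernstein/Freedman‑type concentration inequality for martingales — which is exactly the ``simplified version of Theorem~20 in~\cite{driftanalysis}'' being asserted. First I would set $\xi_s := Y(s+1)-Y(s)\in\{-1,0,+1\}$ and let $\mathcal F_s$ denote the history of the walk through step $s$. A one‑line computation gives $\expect[\xi_s\mid\mathcal F_s]=\tfrac{p(s)+q(s)}{2}-\tfrac{p(s)-q(s)}{2}=q(s)\le q$ and $\expect[\xi_s^2\mid\mathcal F_s]=\tfrac{p(s)+q(s)}{2}+\tfrac{p(s)-q(s)}{2}=p(s)\le p$, so the conditional variance of each step is $p(s)-q(s)^2$, the increments satisfy $|\xi_s|\le1$, and (using $q(s)\le p(s)\le1$) we may assume $q\le1$.

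Next I would separate drift from noise: write $Y(t)=A(t)+M(t)$ where $A(t)=\sum_{s<t}q(s)$ is the predictable drift and $M(t)=Y(t)-A(t)$ is a martingale with increments bounded by $1+q\le2$ and predictable quadratic variation $\langle M\rangle_t=\sum_{s<t}\bigl(p(s)-q(s)^2\bigr)\le p\,t$. Set $T^{*}:=\min\{T/(2q),\,n^{2}\}$. For every $t\le T^{*}$ we then have $A(t)\le q t\le q\cdot\tfrac{T}{2q}=\tfrac T2$ and $\langle M\rangle_t\le p\,t\le\tfrac{pT}{2q}$, so the event $\{\exists\,t<T^{*}:Y(t)\ge T\}$ is contained in $\{\exists\,t<T^{*}:M(t)\ge\tfrac T2\}$; it remains to bound the probability of the latter.

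To do that I would apply the maximal form of the Bernstein inequality for martingales (Freedman's inequality): for a martingale with increments at most $b$ and quadratic variation at most $V$ up to the relevant (stopping) time, $\Proba[\exists\,t:M(t)\ge\lambda]\le\exp\!\bigl(-\tfrac{\lambda^{2}}{2(V+b\lambda/3)}\bigr)$. One makes the quadratic‑variation bound almost sure by applying this to the stopped martingale $M(t\wedge T^{*})$, whose quadratic variation is at most $\tfrac{pT}{2q}$ deterministically. Taking $b=2$, $V=\tfrac{pT}{2q}$, and $\lambda=\tfrac T2$ yields a bound of the form $\exp\!\bigl(-\Omega\bigl(\tfrac{T}{\,p/q+1\,}\bigr)\bigr)$, which is at most $n^{-2}$ once $T\ge 32\bigl(\tfrac{p-q^2}{2q}+\tfrac23\bigr)\log n$ — indeed $\tfrac pq = 2\tfrac{p-q^2}{2q}+q\le 2\bigl(\tfrac{p-q^2}{2q}+\tfrac23\bigr)$ since $q\le1$, so the hypothesis already gives $T=\Omega\bigl((\tfrac pq+1)\log n\bigr)$ with room to spare for the logarithm base and for using the crude quadratic‑variation bound $pt$ instead of the sharper $\sum(p(s)-q(s)^2)$ that pins the constant $\tfrac{p-q^2}{2q}$ exactly (the latter being the content of Theorem~20 in~\cite{driftanalysis}).

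The main obstacle is to get the variance‑aware concentration to actually bite: a plain Azuma--Hoeffding estimate that uses only the step size $1$ would give a threshold with no $p$‑dependence and hence be far too weak (it would not ``see'' that the walk moves with probability only $p(s)\le p$, so that the accumulated fluctuation is of order $\sqrt{p\,t\log n}$ rather than $\sqrt{t\log n}$); the proof genuinely needs the second‑moment bound $\expect[\xi_s^2\mid\mathcal F_s]=p(s)$. The secondary technical point is bookkeeping: turning the \emph{a posteriori} bound $\langle M\rangle_t\le\tfrac{pT}{2q}$ (valid only for $t\le T^{*}$) into an almost‑sure bound via the stopping time before invoking the inequality, and carrying the constants so as to land precisely on the stated $32(\cdot)\log n$ condition.
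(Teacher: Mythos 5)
Your proposal is correct, and it reaches the same quantitative conclusion as the paper, but by a genuinely different route. The paper first couples $Y(t)$ to a dominating walk $\tilde Y(t)\ge Y(t)$ whose upward bias is boosted so that every increment has conditional mean exactly $q$; it then applies the scalar Bernstein inequality to the centered increments $\tilde Y(i{+}1)-\tilde Y(i)-q$ at each \emph{fixed} horizon $N\le T/(2q)$, using the same second-moment computation $\expect[X_i^2]=p(t)-q^2$, and finally union-bounds over the $\min\{T/(2q),n^2\}$ time steps (which is why it aims for $n^{-4}$ per step rather than $n^{-2}$). You instead keep the true drift, split $Y(t)=A(t)+M(t)$ via the Doob decomposition, bound the predictable part by $A(t)\le qt\le T/2$, and hit the martingale part with Freedman's maximal inequality applied to the stopped process. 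The two approaches buy different things. Yours avoids the union bound over times entirely (the maximal inequality does that work), and—more importantly—it is the technically cleaner formulation: the increments here are \emph{not} independent, since $p(t)$ and $q(t)$ in the intended applications depend on the configuration $\x(t)$ and hence on the walk's history, so the paper's invocation of the i.i.d.-style Bernstein inequality (\Cref{thm:bernstein}) is really standing in for exactly the martingale version you cite; your write-up makes that rigorous. What the paper's coupling buys is a slightly sharper variance constant, $p(s)-q^2$ per step rather than your cruder $p(s)\le p$, which is what lets it land exactly on the stated threshold $32\bigl(\frac{p-q^2}{2q}+\frac23\bigr)\log n$ with the clean per-step failure probability $n^{-4}$; your constant-chasing via $\frac pq\le 2\bigl(\frac{p-q^2}{2q}+\frac23\bigr)$ (using $q\le 1$) still fits under the hypothesis, as you observe, so nothing is lost for the lemma as stated. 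Your closing remark that plain Azuma--Hoeffding would be uselessly weak here, because it ignores that the walk moves only with probability $p(s)$, is precisely the point the paper is exploiting as well.
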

\begin{proof}
To prove this lemma, we will need Bernstein's inequality (cf. \Cref{thm:bernstein}). We first introduce the random variables $\tilde Y(t)$, coupled to $Y(t)$ as follows:
$\tilde Y(t)$ is a random walk such that with probability $1-p(t)$, $\tilde Y(t+1) = \tilde Y(t)$, with probability $\frac {p(t) + q} {2}$, $\tilde Y(t+1) = \tilde Y(t)+1$, and with probability $\frac {p(t) - q} {2}$, $\tilde Y(t+1) = \tilde Y(t)-1$.
Also, with probability $1$, $Y(t+1) = Y(t)$ iif $\tilde Y(t+1) = \tilde Y(t)$ and if $Y(t+1) \ge Y(t)$, then $\tilde Y(t+1) \ge \tilde Y(t)$.

This can easily be done by sampling a random number $r(t) \in (0,1)$ uniformly at random for every $t$.
If $r(t) \le 1-p(t)$, we set $Y(t+1) = Y(t)$ and $\tilde Y(t+1) = \tilde Y(t)$.
If $1-p(t)\le r(t) \le 1-p(t) + \frac{p(t) +q(t)}{2}$, we set $Y(t+1) = Y(t)+1$ and $\tilde Y(t+1) = \tilde Y(t)+1$.
If $1-p(t)+\frac{p(t) +q(t)}{2}\le r(t) \le 1-p(t) + \frac{p(t) +q}{2}$, we set $Y(t+1) = Y(t)-1$ and $\tilde Y(t+1) = \tilde Y(t)+1$.
Else, we set $Y(t+1) = Y(t)-1$ and $\tilde Y(t+1) = \tilde Y(t)-1$.

Essentially, if $Y(t)$ increases, so does $\tilde Y(t)$.
If $Y(t)$ stays put, so does $\tilde Y(t)$.
If $Y(t)$ decreases, $\tilde Y(t)$ might either decrease or increase.
This coupling ensures that $\tilde Y(t) \ge Y(t)$ for all $t$.
To prove the theorem, it thus suffices to prove the theorem for $\tilde Y(t)$.

We will now use Bernstein's inequality with $X_i = \tilde Y({i+1}) - \tilde Y({i}) - q$.
We use $M = 2$ and $\expect [X_i^2] = (1-p(t)) q^2 + \left(\frac {p(t) + q}{2}\right) (1-q)^2 + \left(\frac {p(t) - q}{2}\right) (-1-q)^2 = p(t)-q^2 \le p-q^2$.
Hence for any $N$, $\sum_{i\in[N]} \expect [X_i^2] \le N (p-q^2)$.
With $N \le \frac T {2q}:$
\begin{align*}
    \Proba\left(\tilde Y(N) \ge T \right) &= \Proba\left(\sum_{t=1}^N X_i \ge T -qN\right) 
    \le \Proba\left(\sum_{t=1}^N X_i \ge \frac T 2\right) 
    \\ &\le \exp\left(-\frac {\frac 1 8  T^2}{ \sum_{i\in[N]} \expect [X_i^2]+\frac {2T} {3}} \right)\\
    &\le \exp\left(-\frac {\frac 1 8  T^2}{ N(p-q^2)+\frac {2T} {3}} \right) 
    \le \exp\left(-\frac {\frac 1 8  T}{ \frac {p-q^2} {2q}+\frac {2} {3}} \right) \\ &\le n^{-4}
\end{align*}

Using a union bound over the first $\min \{\frac{T}{2q}, n^2\}$ steps, we get that the probability that $T$ is reached within the first $\min \{\frac{T}{2q}, n^2\}$ steps is at most $n^{-2}$.
\end{proof}

We now prove that w.h.p. 
any given opinion can not increase too much in $\theta(kn)$ interactions, if it starts with fewer than $\frac{3n}{2k}$ nodes.

\begin{lemma}\label{lem:xone}
Let $i \in \{1, \dots, k\}$ be an arbitrary but fixed opinion. 
Let $\tau_{3n/2k} \le n^2$ be a time where
$x_i(\tau_{3n/2k}) \le \frac{3n}{2k}$ and $\tau_{2n/k}$ the random variable denoting the time at which $x_i(\tau_{2n/k}) = \frac{2n}{k}$. 
Then, with probability 
at least $1-O(n^{-2})$ we have $\tau = \tau_{2n/k} - \tau_{3n/2k} \geq \frac {kn}{25}$.
\end{lemma}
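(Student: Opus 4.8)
The plan is to track $x_i$ after the reference time $\tau_{3n/2k}$ as a lazy $\pm1$ random walk and hand it to Lemma~\ref{thm:randomwalk}. First I would write down the one-step law of $x_i$ given the current configuration $\x=(x_1,\dots,x_k,u)$: the count $x_i$ increases by one exactly when an $i$-opinionated node meets an undecided node, which happens with probability $2x_i u/(n(n-1))$; it decreases by one exactly when an $i$-opinionated node meets a node of some other decided opinion, with probability $2x_i(n-u-x_i)/(n(n-1))$; otherwise it is unchanged. Writing these as $\tfrac{p(t)+q(t)}{2}$ and $\tfrac{p(t)-q(t)}{2}$ identifies the move-probability $p(t)=2x_i(n-x_i)/(n(n-1))$ and the drift $q(t)=2x_i(2u-n+x_i)/(n(n-1))$, matching the shape required by Lemma~\ref{thm:randomwalk}; the inequality $q(t)\ge-p(t)$ is automatic since the up-probability is nonnegative.

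Next I would bound $p(t)$ and $q(t)$ uniformly over the relevant window. Let $\tau_{2n/k}$ be the first time after $\tau_{3n/2k}$ that $x_i$ reaches $\tfrac{2n}{k}$ (if it never does, the claim is vacuous). For every $t<\tau_{2n/k}$ we have $x_i\le\tfrac{2n}{k}$, hence $p(t)\le\tfrac5k=:p$ for $n$ large. For the drift I would condition on the event of Lemma~\ref{lem:stableu}, that $u\le\tfrac n2-\tfrac n{4k}+\tfrac{10n}{(k-1)^2}+O(\sqrt{n\log n})$ throughout the first $n^4$ interactions; then $2u-n+x_i\le\tfrac{3n}{2k}+\tfrac{20n}{(k-1)^2}+O(\sqrt{n\log n})$, and since $k=\omega(1)$ and $k=o(\sqrt n/\log n)$ the last two terms are $o(n/k)$, so $q(t)\le(1+o(1))\tfrac{6}{k^2}=:q$. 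To apply Lemma~\ref{thm:randomwalk} cleanly in spite of this conditioning, I would run $x_i$ only until the first time it hits $\tfrac{2n}{k}$ or $u$ leaves the Lemma~\ref{lem:stableu} range, and let the process thereafter follow a fixed worst-case $\pm1$ walk with parameters $p,q$; the resulting process satisfies the hypotheses of Lemma~\ref{thm:randomwalk} unconditionally and agrees with $x_i$ on the good event.

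Then I would apply Lemma~\ref{thm:randomwalk} to $Y(t)=x_i(\tau_{3n/2k}+t)-x_i(\tau_{3n/2k})$ with $T=\floor{n/(2k)}$. The hypothesis $T\ge32\!\left(\tfrac{p-q^2}{2q}+\tfrac23\right)\log n$ holds because its right side is $\Theta(k\log n)$ while $T=\Theta(n/k)$ and $k=o(\sqrt n/\log n)$. The lemma then gives, with probability $1-n^{-2}$, that $Y(t)<T$ for all $t<\min\{\tfrac{T}{2q},n^2\}$. Since $x_i(\tau_{3n/2k})\le\tfrac{3n}{2k}$ and $T\le\tfrac n{2k}$, staying below $T$ forces $x_i$ to stay below $\tfrac{2n}{k}$; and because $q=(1+o(1))\tfrac6{k^2}$, we get $\tfrac{T}{2q}=(1+o(1))\tfrac{nk}{24}\ge\tfrac{kn}{25}$ for $n$ large, while trivially $\tfrac{kn}{25}\le n^2$. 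Hence for every $t<\tfrac{kn}{25}$ we have $x_i(\tau_{3n/2k}+t)<\tfrac{2n}{k}$, that is, $\tau_{2n/k}-\tau_{3n/2k}\ge\tfrac{kn}{25}$, on the Lemma~\ref{lem:stableu} event. A union bound with that event's $n^{-4}$ failure probability gives the stated $1-O(n^{-2})$.

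I expect the main obstacle to be keeping this a genuine one-line invocation of Lemma~\ref{thm:randomwalk}: that requires (i) a tight enough drift bound, which is exactly where Lemma~\ref{lem:stableu} and the assumption $k=\omega(1)$ enter, so that the $\tfrac{n}{(k-1)^2}$ and $\sqrt{n\log n}$ corrections collapse into a $(1+o(1))$ factor on $q$ and $\tfrac{T}{2q}$ stays above $\tfrac{kn}{25}$; and (ii) making the conditioning on the Lemma~\ref{lem:stableu} event rigorous via the stopped/modified process, so that the Markov structure Lemma~\ref{thm:randomwalk} assumes is not disturbed. Once these are in place, the remaining steps are direct substitutions.
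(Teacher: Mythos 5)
Your proposal is correct and follows essentially the same route as the paper: the same identification of $p(t)=2x_i(n-x_i)/(n(n-1))\le 5/k$ and $q(t)=2x_i(2u-n+x_i)/(n(n-1))\le(1+o(1))6/k^2$ via the Lemma~\ref{lem:stableu} bound on $u$, the same stopped/coupled auxiliary process to make the conditioning rigorous, and the same invocation of Lemma~\ref{thm:randomwalk} with $T=n/(2k)$ yielding $T/(2q)\ge kn/25$. The only cosmetic difference is that the paper rounds the drift up to $q=6.25/k^2$ to land exactly on $kn/25$, whereas you keep the $(1+o(1))$ factor and absorb it at the end.
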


\begin{proof}
    Let us consider the evolution of $x_i(t)$ for some $t \in \{t_0, \dots, t_0+ kn/25\}$, where $t_0= \tau_{3n/2k}$. For the analysis, we create a new random process $\y$ similar to $\x$ as follows: $\y(0) =\x(0)$, and for every $t$, $\y(t+1) = \x(t+1)$ if $\y(t) = \x(t)$ and $u(t+1) \le \tilde u + (20\cdot 132+1)\sqrt{n \log n}$. 
    If $u(t+1)$ does not satisfy the condition, we halt $\y$ (we do not define $\y(t+1)$) and say that $\y$ fails at time $t+1$.
    By \Cref{lem:stableu}, $\y$ fails with probability $O(n^{-4})$ in the first $n^4$ rounds.
    The modification described above enforces that in $\y$ the number of undecided nodes never surpasses  $\frac n 2 - \frac n {4k} + \frac {10n} {(k-1)^2} + (20\cdot 132+1)\sqrt{n\log n}$. 
    Clearly, if during the execution of $\x$ the number of undecided nodes does not reach $\frac n 2 - \frac n {4k} + \frac {10n} {(k-1)^2} +\sqrt{n\log n}+ 20\cdot 132\sqrt{n\log n}$, which happens with probability at least $1-n^{-4}$ in the first $n^4$ interactions, then $\x$ and $\y$ behave identically. 

    Let us now consider the evolution of $x_i(t)$ in $\y$.
     $x_i(t)$ increases by one if a node of opinion $i$ meets with an undecided node, which happens with probability 
    $$\Proba(+1) := \Proba(x_i(t+1) - x_i(t) = 1 \big| \y(t) = \y) = 2\frac {x_i}{n} \frac {u} {n-1}$$
    Similarly, the probability that it decreases by one is 
    $$\Proba(-1) := \Proba(x_i(t+1) - x_i(t) \\= -1 \big|\y(t) = \y) = 2\frac {x_i}{n} \frac {n-u-x_i} {n-1} $$
    The goal is to use \Cref{thm:randomwalk}.
    We thus need to compute $p(t) = \Proba(+1) +\Proba(-1)$ and $q(t) = \Proba(+1)-\Proba(-1)$.
    
    As long as $x_i \le 2 \frac n k$, we have that $\Proba(+1) + \Proba(-1) = 2\frac{x_i}{n} \frac {n-x_i}{n-1} = 2\frac{x_i}{n}(1+o(1)) \le \frac 5 k$. The difference between the two is

    \begin{align*}
        \Proba(+1) &- \Proba(-1) \\
        &= 2\frac {x_i} {n-1} \left(\frac u n - \frac {n-u-x_i} n\right) = 2\frac {x_i}{n(n-1)} (2u-n+x_i)
        \\&\le 2\frac {x_i}{n^2} \left(n-\frac {n}{2k} + O\left(\frac n {k \log n}\right)-n + \frac {2 n}{k}\right)\\
        &= \frac {x_i} {n^2}(3+O(\frac 1 {\log n})) \frac {n}{k} = 3\frac {x_i} n \frac {1+o(1)}{k} \le \frac 6 {k^2} (1+o(1))
    \end{align*}  

We apply \Cref{thm:randomwalk}, with $p = \frac 5 k, q= \frac {6.25} {k^2}$, $T= \frac {n} {2k}$.
    It now suffices to check the conditions of \Cref{thm:randomwalk}.
$$
\frac{p-q^2}{2q} \le  \frac{\frac 5 k}{2\frac {6.25} {k^2}} = O(k) = o(k \log n)
$$
and
$$
T=\frac{n}{2k} \ge \frac{k^2\log^2n}{2k} = \omega(k\log^2n)
$$
Therefore, we have that $T \ge 8(\frac{p-q^2}{2q}+\frac 2 3) \log n$. Thus, $x_i(t_0+\tau) < \frac{2n}{k}$ for any 
$\tau \leq \frac {kn}{25}$ with probability at least $1-n^{-2}$ in $\y$. As $\x$ and $\y$ behave identically with probability $1-n^{-4}$ in the first $n^4$ interactions, the lemma follows.
\end{proof}

Now that we have a good upper bound on $u$ and a good estimate on the order of magnitude of all of the $x_i$, we can show that it takes $\theta(kn)$ interactions for the difference between two opinions to double.

\begin{lemma}\label{lem:deltas}
    Assume that at some time step $t_0  \leq nk \log^2 n $ the difference between any two opinions is at most $\frac \alpha 2 = \omega (\sqrt{n \log n})$, where
    $\alpha = o\left(\frac n k\right)$. Furthermore, assume that $x_i(t_0) ) \le \frac{3n}{2k}$ for every $i \in [k]$.
    Then, with probability at least $1-O(k^2/n^{2})$, the difference between any two opinions does not exceed $\alpha$ before interaction $t_0 +\tau$, where $\tau = \frac 1 {24} kn$.
\end{lemma}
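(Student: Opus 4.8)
The plan is to fix an ordered pair $(i,j)$ with $i\neq j$, show that $x_i(t)-x_j(t)\le\alpha$ for all $t\in[t_0,t_0+\tau]$ with probability at least $1-n^{-2}$, and then union bound over the at most $k(k-1)$ ordered pairs (together with the $O(n^{-4})$ failure probability of \Cref{lem:stableu}) to get the claimed $1-O(k^2/n^2)$. Write $\Delta(t)=x_i(t)-x_j(t)$ and $\x=(x_1,\dots,x_k,u)$ for the current configuration. A single interaction changes $\Delta$ by $\pm1$ or leaves it unchanged---it is unchanged whenever an $i$-node meets a $j$-node, since both turn undecided---and the transition rules give
\begin{align*}
p(t)&:=\Proba[\Delta\to\Delta+1\mid\x]+\Proba[\Delta\to\Delta-1\mid\x]=\frac{2(x_i+x_j)(n-x_i-x_j)}{n(n-1)},\\
q(t)&:=\Proba[\Delta\to\Delta+1\mid\x]-\Proba[\Delta\to\Delta-1\mid\x]=\frac{2\Delta}{n(n-1)}\bigl(2u-n+x_i+x_j\bigr).
\end{align*}
So $Y(t):=\Delta(t_0+t)-\Delta(t_0)$ is exactly a walk of the form covered by \Cref{thm:randomwalk}, and since $\Delta(t_0)\le\alpha/2$ it is enough to show $Y$ stays below $T:=\alpha/2$ for $\tau$ steps.

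The crux is that $p(t)$ and, above all, the drift $q(t)$ are small while the process is well-behaved. Let $\sigma$ be the first $t>t_0$ at which $u(t)$ leaves the band given by \Cref{lem:stableu} (upper) and the lower estimate $u(t)\ge\frac{n}{2}-\frac{x_1(t)}{2}$, or at which some two opinions differ by more than $\alpha$. For $t<\sigma$ all opinions lie within $\alpha$ of one another, so $\sum_\ell x_\ell\le n$ forces $x_\ell\le n/k+\alpha\le 2n/k$ for every $\ell$ (compare \Cref{lem:xone}); together with the two-sided control of $u$ this gives $x_i+x_j=\Theta(n/k)$, hence $p(t)\le 5/k=:p$, and---crucially---$\lvert 2u-n+x_i+x_j\rvert\le\frac{3n}{2k}(1+o(1))$, because $2u-n=-\frac{n}{2k}(1+o(1))$ and $x_i+x_j\le\frac{2n}{k}(1+o(1))$ nearly cancel to order $n/k$ rather than order $n$. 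Hence, while $\lvert\Delta\rvert\le\alpha$, $\lvert q(t)\rvert\le\frac{3\alpha}{nk}(1+o(1))=:q$. We let $Y$ follow $\Delta(t_0+\cdot)-\Delta(t_0)$ up to $\sigma$ and freeze it afterwards, so $0\le p(t)\le p$ and $-p(t)\le q(t)\le q$ hold unconditionally.

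Now apply \Cref{thm:randomwalk} to $Y$ with these $p,q$ and $T=\alpha/2$. Since $\alpha=\omega(\sqrt{n\log n})$ we have $\frac{p-q^2}{2q}=O(n/\alpha)=o(\sqrt n)$, so $T=\frac{\alpha}{2}=\omega(\sqrt{n\log n})\ge 32\bigl(\frac{p-q^2}{2q}+\frac{2}{3}\bigr)\log n$; moreover $\frac{T}{2q}\ge\frac{nk}{12}(1-o(1))>\tau=\frac{nk}{24}$ and $\tau<n^2$. Thus with probability at least $1-n^{-2}$, $Y(t)<\alpha/2$ for every $t\le\tau$, so $\Delta(t_0+t)<\alpha$ for every $t\le\tau$ with $t_0+t<\sigma$. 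Finally condition on $u(t)$ obeying both bounds throughout $[t_0,t_0+\tau]$, which fails with probability $O(n^{-4})$ by \Cref{lem:stableu} (and the analogous lower-bound estimate): on this event, if some $\lvert x_{i'}(t)-x_{j'}(t)\rvert$ first exceeds $\alpha$ at a time $t\le t_0+\tau$, that time equals $\sigma$ and the walk $Y_{i'j'}$ has just reached $\alpha/2$, an event of probability at most $n^{-2}$ for each fixed pair; union bounding over the $\le k^2$ ordered pairs finishes the proof.

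The main obstacle, as I see it, is the uniform drift bound---showing $\lvert 2u-n+x_i+x_j\rvert=O(n/k)$ throughout the window. It is exactly this cancellation that makes $q$ of order $\alpha/(nk)$ rather than $\alpha/n$, hence what lets $T/(2q)$ cover the full $\tau=nk/24$; a cruder input (e.g. $x_i+x_j\le 4n/k$ from \Cref{lem:xone} alone, without the refinement $x_\ell\le n/k+\alpha$) would fall short by a constant. This requires genuine two-sided concentration of $u$ around $\frac{n}{2}-\frac{n}{4k}$: the upper deviation is \Cref{lem:stableu}, but the lower deviation must be imported (via $u\ge\frac{n}{2}-\frac{x_1}{2}$) or re-derived by a symmetric drift argument. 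A secondary subtlety is that $q(t)$ depends on $\Delta$ itself and that the ``wrong-sign'' case ($\Delta<0$ together with $2u-n+x_i+x_j<0$, which produces an upward drift on $\Delta$) has to be absorbed---either through the lower bound on $u$ or by observing that the drift is then mean-reverting toward $0$---which the stopping-time coupling handles cleanly.
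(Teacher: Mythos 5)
Your proposal matches the paper's proof essentially step for step: the same lazy $\pm 1$ random-walk model for $\Delta_{ij}$, the same computation of $p(t)=\frac{2(x_i+x_j)(n-x_i-x_j)}{n(n-1)}$ and $q(t)=\frac{2\Delta_{ij}}{n(n-1)}(2u-n+x_i+x_j)$ with the key cancellation $2u-n+x_i+x_j=O(n/k)$, the same application of \Cref{thm:randomwalk} with $T=\alpha/2$ and $q=\Theta(\alpha/(nk))$ giving $T/(2q)=\Theta(nk)$, and the same stopped-process coupling and union bound over pairs and over the events of \Cref{lem:stableu,lem:xone}. Your explicit handling of the negative-$\Delta$ (wrong-sign drift) case and the refinement $x_\ell\le n/k+\alpha$ are in fact more careful than the paper's write-up, which tacitly assumes $x_i\ge x_j$ and $x_i+x_j\le \frac{2n}{k}(1+o(1))$, but these are refinements within the same argument rather than a different route.
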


\begin{proof}
Consider two arbitrary but fixed opinions $i$ and $j$. 
    Similarly to the proof of Lemma \ref{lem:xone} we analyze the evolution of $\Delta_{ij}(t) =x_i(t) - x_j(t)$ for a time step $t 
    \in \{t_0, \dots , t_0+\tau\}$.
    For the analysis, we create a new random process $\y$ similar to $\x$ as follows: $\y(0) =\x(0)$, and for every $t$, $\y(t+1) = \x(t+1)$ if $\y(t) = \x(t)$ and $u(t+1) \le \tilde u + (20\cdot 132+1)\sqrt{n \log n}$ and $x_i(t+1) \le \frac {2n} k$ for every $i$. 
    If either $u(t+1)$ or one of the $x_i$ does not satisfy the condition, we halt $\y$ (we do not define $\y(t+1)$) and say that $\y$ fails at time $t+1$.
    
     Clearly, $\x$ and $\y$ behave identically with probability $1-O(kn^{-2})$ according to Lemmas  \ref{lem:stableu} and \ref{lem:xone} (by union bound over the $k$ different opinions).

    We consider now the evolution of $\Delta_{ij}(t)$ in $\y$. For ease of notation, we set $x_i =x_i(t)$, $x_i =x_i(t)$, and $u=u(t)$. $\Delta_{ij}(t)$ increases by one if $x_i$ increases by one, which happens when a node with opinion $i$ interacts with an undecided node, hence with probability $2\frac {x_i} n \frac u {n-1}$.
    It also increases by 1 if $x_j$ decreases by one, but not $x_i$, which happens when a node with opinion $j$ interacts with a node of opinion in $[k] \setminus \{i, j\}$, hence with probability $2\frac {x_i} n \frac {n-u-x_i-x_j} {n-1}$.
    Therefore, the probability that $\Delta_{ij}(t+1) > \Delta_{ij}(t)$ is:
    \begin{multline*}
    \Proba(+1) := \Proba(\Delta_{ij}(t+1) - \Delta_{ij}(t) = 1 \big| \y(t)=\y) \\= 2\frac {x_i}{n} \frac u {n-1} + 2\frac {x_j}{n}\frac {n-u-x_i-x_j} {n-1}
    \end{multline*}
    Similarly, the probability that it decreases by one is 
    \begin{multline*}\Proba(-1) := \Proba(\Delta_{ij}(t+1) - \Delta_{ij}(t) = -1 \big| \y(t)=\y) \\= 2\frac {x_j}{n} \frac u {n-1} + 2\frac {x_i}{n}\frac {n-u-x_i-x_j} {n-1}\end{multline*}
    The goal is to use \Cref{thm:randomwalk}.
    We thus need to compute $p(t) = \Proba(+1) +\Proba(-1)$ and $q(t) = \Proba(+1)-\Proba(-1)$.
    Expanding the terms, we get that the sum of $\Proba(+1)$ and $\Proba(-1)$ is:

    \begin{align*}
    \Proba(+1) +\Proba(-1)&= 2\frac{(x_i+x_j) u}{n(n-1)} + 2\frac {(x_i+x_j)}{n(n-1)}(n-u-x_i-x_j)\\
    &\le  2\frac{4 \frac n k}{n(n-1)}(n-x_i-x_j) = 8\frac{1}{k}(1+o(1)) 
    \end{align*}
    The difference between the two is:

    \begin{align*}
        \Proba(+1) &- \Proba(-1)\\ &= \frac 2 {n(n-1)}((ux_i + nx_j - ux_j -x_ix_j -x_j^2) \\&\quad +(-ux_j - nx_i +ux_i +x_i^2 +x_ix_j)) \\
        &= \frac 2 {n(n-1)} (2u(x_i -x_j) -n(x_i - x_j)  \\ &\quad+(x_i - x_j) +(x_i-x_j)(x_i+x_j)) \\
        &= \frac 2 {n(n-1)}(x_i-x_j)(2u-n+x_i+x_j)\\
        &\le \frac{2}{n(n-1)}(x_i-x_j)(n-\frac n {2k} + o (\frac n {k}) -n+2\frac n {k})\\
        &=  \frac 3 {n-1} \Delta_{ij} (\frac 1 k +o(\frac 1 {k}))
    \end{align*}  
    If the difference at a time $t$ is $x_i(t)-x_j(t) \le \alpha$, then  $\Proba(+1) - \Proba(-1) \le \frac {3\alpha} {nk}(1+o(1)) $.
    
    We apply \Cref{thm:randomwalk}, with $p=  \frac 9 k$, $q= \frac {6\alpha} {nk}$, and $T = \frac \alpha 2$, $Y(t) = \Delta_{ij}(t) - \frac \alpha 2$.
    Indeed, we can see the problem as a random walk that starts at $Y(0) = 0$ (which corresponds to $\Delta_{ij} = \frac \alpha 2$) and whose target is $Y(\tau)=\frac \alpha 2$ (which corresponds to $\Delta_{ij} = \alpha$).
    As long as the target is not reached, we have that $\Delta_{ij} \le \alpha$, and thus, $q(t) \le \frac{6\alpha}{nk}$.
    It now suffices to check the conditions of \Cref{thm:randomwalk}.

    We have $T =  \frac \alpha 2 = \omega(\sqrt{n \log n})$, $\frac 2 3 \log n = o(\sqrt {n \log n})$ and $\frac{p-q^2}{q} \log n \le \frac p q \log n = O(\frac n \alpha \log n) = o(\frac n {\sqrt{n \log n} }\log n) = o(\sqrt{n \log n})$, and thus $32\log n \left(\frac{p-q^2}{2q}+\frac 2 3\right) = o(\sqrt{n \log n})\le T = \omega (\sqrt{n \log n})$.
    We thus know that with probability at least $1-n^{-2}$ the difference will not exceed $\alpha$ before $\frac T {2q} = \frac 1 {24} kn$. 

    Taking a union bound over all $k(k-1)/2$ pairs of opinions, we know that the probability that one difference exceeds $\alpha$ at time $t_0 + \tau$ is less than $k^2/n^{-2}$. Since $\x$ and $\y$ as described at the beginning of this proof behave identically with probability $1-O(kn^{-2})$, we obtain the lemma. The lemma also implies that with probability $1-O(k^2/n^{2})$ the support of each opinion is less than $3n/2k$ at time $t_0 + \tau$.   
\end{proof}

We are now ready to prove the main theorem.

\begin{theorem}
For any $k = \omega(1), k= o(\frac {\sqrt n} {\log n})$, any initial configurations where the maximum difference between two opinions is $\max_{i,j \in [k]} \{x_i(0)-x_j(0)\}=O\left((\sqrt{n}/(k \log n))^{1/4} \sqrt{n \log n}\right) $ under the Undecided State Dynamics for Plurality Consensus does not stabilize in $\frac k {25} \log \frac {\sqrt n} {k \log n}$ parallel time, with high probability.    \end{theorem}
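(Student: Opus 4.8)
The plan is to prove that, with high probability, the configuration reached after $T := \tfrac{k}{25}\,n\log\tfrac{\sqrt n}{k\log n}$ interactions is still non-absorbing; since the output map of the Undecided State Dynamics is the identity, this is exactly ``not stabilized''. The only absorbing configurations are the all-undecided one and the $k$ monochromatic ones. \Cref{lem:stableu} already excludes the all-undecided configuration \whp (it gives $u(T)\le \tfrac n2-\tfrac n{4k}+o(\tfrac nk)<n$), so I would reduce the claim to the statement that every opinion satisfies $x_i(T)\le \tfrac{3n}{2k}$, which is strictly below $n$ because $k=\omega(1)$. A further reduction: since $\sum_i x_i(t)=n-u(t)\le n$, it is enough to control the maximal gap $g(t):=\max_{i,j}\{x_i(t)-x_j(t)\}$, because $g(t)\le \tfrac n{2k}$ already forces $x_i(t)\le \tfrac{n-u(t)}{k}+g(t)\le \tfrac{3n}{2k}$ for every $i$. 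Thus the whole theorem reduces to: \whp $g(t)=o(n/k)$ for all $t\le T$.

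To obtain this I would run an induction over $M:=\big\lceil\log_2\tfrac{\sqrt n}{k\log n}\big\rceil$ consecutive epochs of $\tau=\Theta(kn)$ interactions each (the hidden constant, essentially $\tfrac1{25}$, chosen so that both \Cref{lem:xone} and \Cref{lem:deltas} apply over any window of length $\tau$ and so that $M\tau\ge T$). Let $D_0$ be an upper bound on the initial gap that is $\omega(\sqrt{n\log n})$ yet still $O\big((\sqrt{n}/(k\log n))^{1/4}\sqrt{n\log n}\big)$; such a value exists under the hypothesis of the theorem, inflating the true initial gap if it happens to be smaller. The invariant to maintain at the start of epoch $m$ (time $t_m=m\tau$) is: $x_i(t_m)\le \tfrac{3n}{2k}$ for all $i$, and $g(t_m)\le 2^{m}D_0$. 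The base case $m=0$ is the second reduction above with $u(0)=0$. For the inductive step I would apply \Cref{lem:deltas} with $\alpha=2^{m+1}D_0$, checking its hypotheses $\alpha/2=2^mD_0=\omega(\sqrt{n\log n})$, $\alpha=o(n/k)$, and $t_m\le M\tau=O(kn\log n)\le kn\log^2 n$; it then guarantees, with probability $1-O(k^2/n^2)$, that $g(t)<\alpha=2^{m+1}D_0$ throughout the epoch and, as its closing remark states, that all supports stay below $\tfrac{3n}{2k}$, which re-establishes the invariant at $t_{m+1}$. A union bound over the $M=O(\log n)$ epochs (each invocation of \Cref{lem:deltas} also absorbing the failure events of \Cref{lem:stableu} and \Cref{lem:xone}) keeps the total failure probability at $O(k^2\log n/n^2)=n^{-\Omega(1)}$, using $k=o(\sqrt n/\log n)$.

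Assembling these pieces, \whp the invariant survives all $M$ epochs, so at time $M\tau\ge T$ every opinion has support at most $\tfrac{3n}{2k}<n$ while $u<n$ by \Cref{lem:stableu}; the configuration is then non-absorbing, so the protocol has not stabilized within $\tfrac k{25}\log\tfrac{\sqrt n}{k\log n}$ parallel time, which is the theorem.

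The step I expect to be the main obstacle is the quantitative heart of the induction --- keeping the gap threshold $2^{m+1}D_0$ below $o(n/k)$ for every $m$ up to $M$, i.e.\ $2^{M+1}D_0=o(n/k)$. Read naively the induction lets $g$ double each epoch, which would force $D_0=o(\sqrt n\log n)$ --- weaker than the admissible initial bias for small $k$. The needed slack comes from the fact that the drift of $g$ is \emph{multiplicative}: the estimate inside \Cref{lem:deltas} shows that, conditioned on $g(t)\le\alpha$, the one-step expected change of $g$ is only $O(\alpha/(nk))$, so over $\Theta(kn)$ interactions $g$ grows in expectation by a constant \emph{fraction} of $\alpha$, with fluctuations of order only $\sqrt{n\log n}=o(\alpha)$, i.e.\ by a factor $c$ bounded strictly below $2$. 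Hence $g$ can survive $\log_c\big(\Theta(n/k)/D_0\big)$ epochs while remaining $o(n/k)$, and a direct computation --- crucially using $k=o(\sqrt n/\log n)$, so that $(k^2/n)^{\Theta(1)}=o(1)$ --- shows this number of epochs exceeds $M$ precisely when $D_0=O\big((\sqrt{n}/(k\log n))^{1/4}\sqrt{n\log n}\big)$; this matching is, I believe, the origin of the otherwise mysterious exponent $1/4$ in the admissible bias. Making this bookkeeping go through, rather than chaining \Cref{lem:deltas} as a black box, is the crux of the proof.
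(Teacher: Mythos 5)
Your overall architecture matches the paper's: an induction over $\Theta(kn)$-length epochs, maintaining $x_i\le\frac{3n}{2k}$ and a doubling bound on the maximal gap via \Cref{lem:stableu}, \Cref{lem:xone} and \Cref{lem:deltas}, plus a counting argument to recover the support bound from the gap bound. The reduction to ``the configuration at time $T$ is non-absorbing'' and the inequality $x_i\le\frac{n-u}{k}+g$ are both fine. The problem is exactly the step you flag as the crux, and it is a genuine gap, not just bookkeeping: with $M=\lceil\log_2\frac{\sqrt n}{k\log n}\rceil$ doubling epochs and the admissible bias $D_0=\Theta\bigl((\sqrt n/(k\log n))^{1/4}\sqrt{n\log n}\bigr)$, the terminal bound is $2^M D_0=\bigl(\tfrac{\sqrt n}{k\log n}\bigr)^{5/4}\sqrt{n\log n}$, and the ratio of this to $n/k$ is $\bigl(\tfrac{\sqrt n}{k\log^{3}n}\bigr)^{1/4}$, which is $\omega(1)$ for most of the admissible range of $k$ (e.g.\ $k=\mathrm{polylog}\, n$). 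So the invariant $g=o(n/k)$ cannot be maintained for $M$ epochs by pure doubling. Your proposed repair --- showing the per-epoch growth factor is some $c<2$ (you would in fact need $c\le 2^{3/4}$ to beat the $\tfrac34$ deficit in $\log_2\frac{n/k}{D_0}$ versus $M$) --- is plausible but requires reproving \Cref{lem:deltas} with target $(c-1)\alpha$ instead of $\alpha/2$ and redoing the application of \Cref{thm:randomwalk}; you have not carried this out, and it is precisely where the quantitative content of the theorem lives.

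The paper resolves this much more cheaply, and differently from your guess about the exponent $1/4$: it keeps \Cref{lem:deltas} as a black box (pure doubling) but runs the induction for only $\log\bigl((\sqrt n/(k\log n))^{1/4}\bigr)=\tfrac14\log\frac{\sqrt n}{k\log n}$ epochs. Then $2^{\ell}\beta\le(\sqrt n/(k\log n))^{1/2}\sqrt{n\log n}=n^{3/4}/k^{1/2}=(k^2/n)^{1/4}\cdot\tfrac nk=o(n/k)$ using only $k=o(\sqrt n)$, the invariant survives, and the total is still $\Theta\bigl(kn\log\frac{\sqrt n}{k\log n}\bigr)$ interactions (the exponent $1/4$ in the admissible bias is chosen exactly so that a constant fraction of $\log\frac{\sqrt n}{k\log n}$ doubling epochs fit between $\beta$ and $o(n/k)$; note the paper's own proof thereby establishes the constant $\tfrac1{100}$ rather than the $\tfrac1{25}$ in the statement). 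If you insist on the literal constant $\tfrac{1}{25}$ you are forced into your harder sub-doubling analysis; if you only want the asymptotic lower bound, shrinking the epoch count closes your gap immediately.
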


\begin{proof}
Define $f$: $f(n)=  \left(\frac{\sqrt n}{k \log n }\right)^{\frac 1 4}$. 

We bunch together the interactions by groups of $\tau =\frac {kn}{25}$ interactions, and use induction on $\ell$: For $\ell \le \log \left( \frac {\frac {n^{\frac 3 4}} {k^{\frac 1 2}}}{\sqrt{n \log n} f(n)}\right)$, after the $\ell$-th group of interactions, we have that $x_i(\ell\tau) \le \frac {3n}{2k}$ for every $i$, and that $\Delta(\ell\tau) := \max_{ij}\Delta_{ij}(\ell\tau) \le 2^\ell\beta$ with probability $1-O(\ell k^2/n^2)$, where the initial bias starts at $\Delta(0) = \beta = O\left(f(n) \sqrt{n \log n}\right)$.

The base step is trivial.
To show the induction step, we first note that for $\ell \le \log \left( \frac {\frac {n^{\frac 3 4}} {k^{\frac 1 2}}}{\sqrt{n \log n} f(n)}\right)$, we have $2^\ell \beta \le \frac {n^{\frac 3 4}}{k^ {\frac 1 2}}$, which in turn means that, using $k \le \frac {n^{\frac 1 2}}{\log n}$, $2^\ell \beta \le \frac {n^{\frac 3 4}} {k^{\frac 1 2}} \cdot \left(\frac {n^{\frac 1 2}} {k\log n}\right) ^{\frac 1 2} =o(\frac  n k)$.
We then apply \Cref{lem:xone,lem:deltas}: First, since for any $i$, $x_i(\ell\tau) \le \frac{3n}{2k}$, and by our result on $u$ (\Cref{lem:stableu}), we apply \Cref{lem:xone} to have that during the whole interval, with probability at least $1-O(n^{-2})$, $x_i$ does not exceed $2\frac n k$.
By union bound over all values of $i$, it holds that for all $i$, no $x_i$ exceeds $2\frac n k$ during the $\frac {kn}{25}$ interactions with probability at least $1-O(kn^{-2})$.
Then, since $\Delta(\ell\tau) \le 2^\ell\beta$, $x_i(t) \le 2 \frac n k$ for any $t \le (\ell+1) \tau$, and by our result on $u$, we can apply \Cref{lem:deltas}.
This ensures that $\Delta((\ell+1)\tau) \le 2^{\ell+1} \beta$ with probability at least $1-O((\ell+1) k^2/n^2)$.
For $\ell+1 \le \log \left( \frac {\frac {n^{\frac 3 4}} {k^{\frac 1 2}}}{\sqrt{n \log n} f(n)}\right)$, this ensures $\Delta((\ell+1)\tau) \le \frac {n^{\frac 3 4}}{k^ {\frac 1 2}}$, which in turn means that, using $k \le \frac {n^{\frac 1 2}}{\log n}$, $\Delta((\ell+1)\tau) \le \frac {n^{\frac 3 4}} {k^{\frac 1 2}} \cdot \left(\frac {n^{\frac 1 2}} {k\log n}\right) ^{\frac 1 2} =o(\frac  n k)$.
    Thus $x_i((\ell+1)\tau) \le \frac {3n} {2k} $ for every $ i \in [k]$, as otherwise $\sum_{j \in [k]} x_j((\ell+1)\tau) \ge \frac 3 2\frac n k + \sum_{j \neq i} x_j((\ell+1)\tau) \ge \frac 3 2\frac n k + (k-1) \frac n k >n$.
    This proves the double induction.  

    Hence, with high probability, the system does not stabilize before $\frac {kn}{25} \log \left( \frac {\frac {n^{\frac 3 4}} {k^{\frac 1 2}}}{\sqrt{n \log n} f(n)}\right) = \frac {kn}{25} \left(\frac 1 2\log \left( \frac {\sqrt n} {k \log n}\right) - \log f(n) \right)=\theta \left(kn \log \frac {\sqrt n} {k\log n}\right)$ interactions, concluding the proof.
\end{proof}

\section{Conclusion}\label{sec:conclusion}

We presented an almost tight lower bound on the stabilization time of the Undecided State Dynamics for plurality consensus in the population protocol model. While our result settles the question about the stabilization time, there are several interesting avenues for future research. In particular, it would be interesting to explore scenarios where (slightly) more memory is available at the nodes and where synchronization is possible to some extent: at which point can we break the lower bound barrier? Another open question concerns the required initial bias. While it is known from previous work that with an initial bias in the order of $O(\sqrt{n})$, the system can stabilize to a minority opinion with non-negligible probability~\cite{CGGNPS18}, we assumed a slightly higher initial bias of $\Omega (\sqrt{n\log n})$ which ensures stabilization to the majority opinion (cf.~\cite{DBLP:conf/podc/AmirABBHKL23}); it remains to close this small gap.

\begin{acks}
    This project has received funding from the European Research Council (ERC) under the European Union's Horizon 2020 research and innovation programme (MoDynStruct, No. 101019564)  \includegraphics[width=0.9cm]{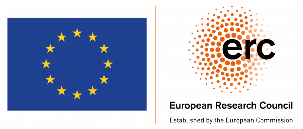} and the Austrian Science Fund (FWF) grant  \href{https://www.doi.org/10.55776/I5862}{DOI 10.55776/I5862}, grant \href{https://www.doi.org/10.55776/I5982}{DOI 10.55776/I5982}, and grant \href{https://www.doi.org/10.55776/P33775}{DOI 10.55776/P33775} with additional funding from the netidee SCIENCE Stiftung, 2020–2024 and the German Research Foundation (DFG), grant 470029389 (FlexNets).
\end{acks}

\appendix
\section{Useful theorems}

\begin{theorem}[Theorem 2 of~\cite{DBLP:journals/tcs/OlivetoW15}]\label{thm:driftlowbound}
    Let $X_t$, $t\ge 0$,be real-valued random variables describing a stochastic process over some state space.
    Suppose there exist an interval $[a,b] \subseteq \R$ and, possibly depending on $\ell := b-a$, a drift bound $\epsilon := \epsilon(\ell) >0$ as well as a scaling factor $r:= r(\ell)$ such that for all $t \ge 0$ the following three conditions hold:
    \begin{itemize}
        \item $\expect[X_{t+1}-X_t | X_0, \dots, X_t; a < X_t <b] \ge \epsilon$.
        \item $\Proba [\card{X_{t+1}-X_t} \ge jr | X_0, \dots , X_t] \le e^{-j}$ for $j \in \N_0$.
        \item $1 \le r^2 \le \frac {\epsilon \ell}{132 \log {\frac r \epsilon}}$.
    \end{itemize}

    Then for the first hitting time $T^\star := \min\{t\ge 0: X_t \le a| X_0, \dots , X_t; X_0 \ge b\}$ it holds that $\Proba\left[ T^* \le \exp\left( \frac{\epsilon \ell}{132r^2}\right)\right] = O\left( \exp\left( -\frac{\epsilon \ell}{132r^2}\right) \right)$
\end{theorem}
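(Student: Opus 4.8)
The plan is to prove this by the classical exponential-potential (Hajek-type) method for negative drift. Conceptually, Condition~1 pushes $X_t$ \emph{upward}, away from $a$, while we start at $X_0 \ge b$, so descending all the way down to $a$ should cost exponential time. I would encode this through the potential $g_t := \exp\!\big(-\lambda (X_t - a)\big)$ for a parameter $\lambda = \Theta(\epsilon/r^2)$ to be pinned down, chosen so that $g$ is large precisely when $X_t$ is near $a$: at $X_t=a$ we have $g_t=1$, at $X_t=b$ we have $g_t=e^{-\lambda \ell}$, and the start satisfies $g_0 \le e^{-\lambda\ell}$. Thus $\{X_t\le a\}$ is essentially the event $\{g_t \ge 1\}$, i.e.\ the event that $g$ grows by a factor $e^{\lambda\ell}$ from its initial value. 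The whole argument then reduces to showing that $g$ can only decrease in expectation (a supermartingale statement) and invoking a maximal inequality to bound the probability of such a large upward excursion of $g$.

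The technical heart is the one-step bound on the multiplicative increment $\expect[g_{t+1}/g_t \mid \ldots] = \expect[e^{-\lambda(X_{t+1}-X_t)}\mid\ldots]$ while $a<X_t<b$. Writing $D = X_{t+1}-X_t$, I would expand $e^{-\lambda D} \le 1 - \lambda D + \tfrac{\lambda^2}{2}D^2 e^{\lambda\card{D}}$, take conditional expectations, use the drift lower bound (Condition~1) on the linear term via $\expect[-\lambda D\mid\ldots] \le -\lambda\epsilon$, and use the exponential jump tail (Condition~2) to control the quadratic remainder. The tail bound $\Proba[\card{D}\ge jr]\le e^{-j}$ turns $\expect[D^2 e^{\lambda\card D}]$ into a geometric-type sum converging to $O(r^2)$ as long as $\lambda r$ is small, giving $\expect[e^{-\lambda D}\mid\ldots] \le 1 - \lambda\epsilon + C\lambda^2 r^2$. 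Fixing $\lambda$ at the threshold $\lambda \approx \epsilon/(132 r^2)$ forces the remainder below $\lambda\epsilon$, so the bracket is $\le 1$ and $g_t$ is a supermartingale on the strip; the constant $132$ and the $\log(r/\epsilon)$ term appearing in Condition~3 are exactly what makes the tail sum converge and $\lambda r$ small simultaneously. I expect this calibration to be the main obstacle: the negative exponent $e^{-\lambda D}$ amplifies the rare but heavy downward jumps, so the tail bound must be spent precisely against the chosen $\lambda$, and the sharp constant leaves no slack.

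The remaining difficulty is that Condition~1 controls the drift only inside $(a,b)$, while above $b$ the increment of $g$ is uncontrolled. I would handle this by working with the potential truncated at the top of the strip, $\tilde g_t := \max\{g_t, e^{-\lambda\ell}\}$ (equivalently $g$ evaluated at $\min\{X_t,b\}$), which behaves like $g$ inside the strip and is frozen at the safe value $e^{-\lambda\ell}$ whenever $X_t \ge b$. Inside the strip $\tilde g$ inherits the supermartingale estimate up to an additive error produced by excursions above $b$, and the exponential jump bound (Condition~2) makes each such error term of order $e^{-\lambda\ell}$ times the small probability of jumping out of the strip. Summing this over a horizon of $m$ steps and applying a maximal inequality bounds $\Proba[\sup_{t\le m}\tilde g_t \ge 1]$, which is precisely $\Proba[T^\star \le m]$. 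Choosing $\lambda = \epsilon/(132 r^2)$ and $m = \exp(\epsilon\ell/(132 r^2))$ balances the accumulated truncation error against the single-crossing probability $e^{-\lambda\ell}$ and yields the advertised $O\!\big(\exp(-\epsilon\ell/(132 r^2))\big)$; Condition~3 simultaneously guarantees that $\lambda r$ is small enough for the MGF step and that $\epsilon\ell/(132 r^2) \ge 1$, so the resulting bound is non-trivial. Beyond routine calculus, the two places demanding genuine care are this sharp constant in the moment-generating-function bound and the clean bookkeeping of the uncontrolled region above $b$.
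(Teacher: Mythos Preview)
The paper does not prove this statement: it is quoted verbatim in the appendix under ``Useful theorems'' as Theorem~2 of Oliveto and Witt~\cite{DBLP:journals/tcs/OlivetoW15}, with no proof given. There is therefore nothing in the paper to compare your proposal against.

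For what it is worth, your sketch is the standard Hajek-type exponential-potential argument that underlies the original Oliveto--Witt proof: introduce $g_t = e^{-\lambda(X_t-a)}$, use the drift bound for the linear term and the sub-exponential jump tails to control the MGF so that $g_t$ is (essentially) a supermartingale on the strip, truncate above $b$ to handle the region where the drift is uncontrolled, and finish with a maximal/Markov inequality. The calibration $\lambda \approx \epsilon/(Cr^2)$ and the role of Condition~3 in making both the MGF step and the final bound non-trivial are exactly as you describe. So your plan is sound and aligned with the original source, but it is not something this paper itself carries out.
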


\begin{theorem}[Bernstein's Inequality]\label{thm:bernstein}
Let $X_1 , \dots,  X_n$ be independent zero-mean random variables.
Suppose that $\card {X_i} \le M$ almost surely, for all $i$.
Then, for all positive $t$,
$$
    \mathbb {P} \left(\sum _{i=1}^{n}X_{i}\geq t\right)\leq \exp \left(-{\frac {{\tfrac {1}{2}}t^{2}}{\sum _{i=1}^{n}\mathbb {E} \left[X_{i}^{2}\right]+{\tfrac {1}{3}}Mt}}\right)
    $$
\end{theorem}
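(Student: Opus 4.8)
The plan is to track the single quantity $\Delta(t) := \max_{i,j}\Delta_{ij}(t)$ across consecutive blocks of $\tau = \frac{kn}{25}$ interactions and to argue that it can at most double per block, so that reaching a near-stabilized configuration (where some $x_i$ approaches $n$, which forces $\Delta = \Theta(n/k)$) requires logarithmically many blocks. Concretely, I would set up an induction on the block index $\ell$ with the invariant: with probability $1-O(\ell k^2/n^2)$, after $\ell$ blocks we have $x_i(\ell\tau)\le \frac{3n}{2k}$ for every $i$ and $\Delta(\ell\tau)\le 2^\ell\beta$, where $\beta = \Delta(0) = O(f(n)\sqrt{n\log n})$ with $f(n)=(\sqrt n/(k\log n))^{1/4}$. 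The base case is immediate: at $t=0$ all minority opinions are equal and the majority leads by $\beta = o(n/k)$, so every $x_i(0)\le \frac{n}{k}+\frac{\beta}{2}\le\frac{3n}{2k}$.

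For the inductive step, assume the invariant at block $\ell$. First I would invoke \Cref{lem:stableu} to keep $u(t)$ below $\tilde u + O(\sqrt{n\log n})$ throughout. Because $x_i(\ell\tau)\le\frac{3n}{2k}$, \Cref{lem:xone} then guarantees that no $x_i$ exceeds $\frac{2n}{k}$ during the entire next block, with failure probability $O(n^{-2})$ per opinion, hence $O(kn^{-2})$ after a union bound over the $k$ opinions. With this ceiling on every $x_i$ in hand, and since $\Delta(\ell\tau)\le 2^\ell\beta = o(n/k)$, the hypotheses of \Cref{lem:deltas} are met with $\alpha = 2^{\ell+1}\beta$, so the maximum gap does not exceed $2^{\ell+1}\beta$ before interaction $(\ell+1)\tau$, with the stated failure probability $O(k^2/n^2)$.

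To close the induction I must re-establish the bound $x_i\le\frac{3n}{2k}$ at the end of the block from the new gap bound. This is a pigeonhole step: as long as $2^{\ell+1}\beta \le n^{3/4}/k^{1/2}$, the assumption $k\le\sqrt n/\log n$ forces $2^{\ell+1}\beta = o(n/k)$, and if some $x_i((\ell+1)\tau)$ exceeded $\frac{3n}{2k}$, then every opinion would lie within $o(n/k)$ of it, making the total count $\sum_j x_j$ exceed $n$, a contradiction. Summing failure probabilities over blocks gives the claimed $O(\ell k^2/n^2)$. The induction runs for $\ell$ up to $\log\!\big(\tfrac{n^{3/4}/k^{1/2}}{\sqrt{n\log n}\,f(n)}\big)=\Theta\!\big(\log\tfrac{\sqrt n}{k\log n}\big)$ blocks; multiplying this count by $\tau$ yields the $\Theta\!\big(kn\log\tfrac{\sqrt n}{k\log n}\big)$-interaction, equivalently $\Theta\!\big(k\log\tfrac{\sqrt n}{k\log n}\big)$ parallel-time, non-stabilization bound. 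It holds w.h.p.\ since the accumulated failure probability is $O(\ell k^2/n^2)=o(1)$ under $k=o(\sqrt n/\log n)$.

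The main obstacle is the bookkeeping that makes the induction self-sustaining. The doubling envelope $2^\ell\beta$ must remain $o(n/k)$ at \emph{every} block for the \Cref{lem:deltas} hypothesis to hold, and this is exactly what caps the number of blocks and hence the final bound. The two ceilings must also be threaded carefully: the block-boundary bound $\frac{3n}{2k}$ is what lets \Cref{lem:xone} start, while the stronger within-block bound $\frac{2n}{k}$ it produces is what lets \Cref{lem:deltas} start, and the pigeonhole recovery of $\frac{3n}{2k}$ is what allows a clean restart. Finally I would confirm that the chosen initial bias $\beta=\Theta(f(n)\sqrt{n\log n})$, which is $\omega(\sqrt{n\log n})$ since $k=o(\sqrt n/\log n)$, both lies within the admissible range $O(f(n)\sqrt{n\log n})$ and satisfies the entry condition of \Cref{lem:deltas}, namely $\beta=\omega(\sqrt{n\log n})$ and $\beta=o(n/k)$.
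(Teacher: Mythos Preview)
Your proposal does not address the stated theorem at all. The statement you were asked to prove is Bernstein's inequality (\Cref{thm:bernstein}), a classical concentration bound for sums of bounded, zero-mean, independent random variables. What you have written is instead an outline of the proof of the paper's \emph{main theorem} on the non-stabilization time of the Undecided State Dynamics: the block-induction on $\Delta(t)$, the invocations of \Cref{lem:stableu}, \Cref{lem:xone}, and \Cref{lem:deltas}, and the pigeonhole recovery of the $\frac{3n}{2k}$ ceiling. None of this has any bearing on \Cref{thm:bernstein}.

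For the record, the paper provides no proof of \Cref{thm:bernstein}; it is merely quoted in the appendix as a standard tool. A correct submission here would either cite a reference for Bernstein's inequality or give the usual exponential-moment argument: bound $\expect[e^{\lambda X_i}]$ using $|X_i|\le M$ and a Taylor expansion to obtain $\expect[e^{\lambda X_i}]\le \exp\bigl(\tfrac{\lambda^2 \expect[X_i^2]}{2(1-\lambda M/3)}\bigr)$ for $0<\lambda<3/M$, multiply over $i$ by independence, apply Markov's inequality to $e^{\lambda\sum X_i}$, and optimize over $\lambda$. Your write-up contains none of these ingredients.
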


{\balance
\bibliographystyle{ACM-Reference-Format}
\bibliography{popprotocolbib}
}

\end{document}